\newtheorem{theorem}{Theorem}[section]
\newtheorem{cor}[theorem]{Corollary}
\newtheorem{lemma}[theorem]{Lemma}
\newtheorem{prop}[theorem]{Proposition}
\newtheorem{hypothesis}[theorem]{Hypothesis}
\newtheorem{claim}[theorem]{Claim}
\newtheorem{remark}[theorem]{Remark}
\newtheorem{definition}[theorem]{Definition}
\renewcommand{\epsilon}{\varepsilon}
\renewcommand{\tilde}{\widetilde}
\newcommand{\abs}[1]{\left|#1\right|}
\newcommand{\sett}[2]{\left\{ #1 \left| \; \vphantom{#1 #2} \right. #2  \right\}} %set with condition
\newcommand{\N}{\mathbb{N}}
\newcommand{\Exp}{\mathbb{E}}
\newcommand{\maxcover}{\mbox{\sf MaxCover}\xspace}
\newcommand{\clique}{$k$-$\mathsf{Clique}$\xspace}
\newcommand{\biclique}{$k$-$\mathsf{Biclique}$\xspace}
\newcommand{\oneside}{$k$-$\mathsf{One}$-$\mathsf{Sided}$-$\mathsf{Biclique}$\xspace}
\newcommand{\steiner}{$k$-$\mathsf{Steiner Orientation}$\xspace}
\newcommand{\cnf}{$\mathsf{CNF}$\xspace}
\newcommand{\TGC}{$\mathsf{TGC}$\xspace}
\newcommand{\DPF}{$\mathsf{DPCPF}$\xspace}
\newcommand{\setcover}{\textsf{SetCover}\xspace}
\newcommand{\Cov}{\mathsf{Col}}
\newcommand{\dist}{\Delta}
\newcommand{\neighb}{\mathcal{N}}
\newcommand{\sat}{\mathsf{SAT}}
\renewcommand\eth{$\mathsf{ETH}$}
\newcommand\seth{$\mathsf{SETH}$}
\newcommand\Wone{$\mathsf{W[1]}\neq\mathsf{FPT}$}
\title{{On Hardness of Approximation of\\ Parameterized Set Cover and Label Cover:\\  Threshold Graphs from Error Correcting Codes}}
\author{Karthik C.\ S.\footnote{This work was supported by  the  Israel Science Foundation (grant number 552/16) and the Len Blavatnik and the Blavatnik Family foundation. } \\
Tel Aviv University\\
\texttt{karthiks@mail.tau.ac.il}\\ \  \and Inbal Livni-Navon\footnote{This work was supported by Irit Dinur's ERC-CoG grant 772839.}\\ Weizmann Institute of Science\\
\texttt{inbal.livni@weizmann.ac.il}\\
}
\date{}
\begin{document}

\maketitle

\begin{abstract}
In the $(k,h)$-\setcover problem, we are given a collection $\mathcal{S}$ of sets over a universe $\mathcal{U}$, and the goal is to distinguish between the case that $\mathcal{S}$ contains $k$ sets which cover $\mathcal{U}$, from the case that at least $h$ sets in $\mathcal{S}$ are needed to cover $\mathcal{U}$. Lin (ICALP'19) recently showed a gap creating reduction from the $(k,k+1)$-\setcover problem on universe of size $O_k(\log |\mathcal{S}|)$ to the $\left(k,\sqrt[k]{\frac{\log |\mathcal{S}|}{\log\log |\mathcal{S}|}}\cdot k\right)$-\setcover problem on universe of size $|\mathcal{S}|$. 
In this paper, we prove a more scalable version of his result: given any error correcting code $C$ over alphabet $[q]$, rate $\rho$, and relative distance $\delta$,  we use $C$ to create a reduction from the $(k,k+1)$-\setcover problem on universe $\mathcal{U}$ to the $\left(k,\sqrt[2k]{\frac{2}{1-\delta}}\right)$-\setcover problem on universe of size $\frac{\log |\mathcal{S}|}{\rho}\cdot |\mathcal{U}|^{q^k}$.  \\

Lin established his result by composing the input \setcover instance (that has no gap) with a special threshold graph constructed from extremal combinatorial object called universal sets, resulting in a final \setcover instance with gap. Our reduction follows along the exact same lines, except that we generate the threshold graphs specified by Lin simply using the basic properties of the error correcting code $C$. We further show that one can recover the precise result of Lin by using a code which also achieves optimal parameters as a perfect hash function. \\

\begin{sloppypar}We use the same threshold graphs mentioned above to prove inapproximability results, under W[1]$\neq$FPT and ETH, for the parameterized label cover problem called $k$-\maxcover introduced by Chalermsook et al. (FOCS'17; SICOMP'20). Our inapproximaiblity results match the bounds obtained by Karthik et al. (STOC'18; JACM'19), although their proof framework is very different, and involves generalization of the `distributed PCP framework'. To the best of our knowledge, prior to this work, it was not clear how to adopt the proof strategy of Lin to prove inapproximability results for   $k$-\maxcover.\end{sloppypar}
\end{abstract}
\clearpage

\section{Introduction}\label{sec:intro}

Many optimization problems that we care about are \NP-Hard. Two typical ways to cope with \NP-Hardness are to design approximation algorithms or fixed parameter algorithms. For example, consider the classic \setcover problem which was shown in the seminal work of Karp \cite{K72} to be \NP-Hard. Researchers coped with this hardness by designing approximation algorithms \cite{C79,S95,S96} and studying its fixed parameter tractability. 

Nevertheless,  in many cases we have that  even finding a good approximate solution to \NP-Hard optimization problems is still \NP-Hard (and such results are proved using the  celebrated PCP theorem \cite{AS98,ALMSS98,D07}). In similar spirit, it is also possible to show that many natural parameterized variants of \NP-Hard problems are \textsf{W[1]}-Hard and thus not fixed parameter tractable. Case in point, it was shown that on one hand it is \NP-Hard to approximate \setcover below logarithmic factors \cite{F98,DS14} and on the other hand that the \setcover problem parameterized by the solution size is not fixed parameter tractable assuming \textsf{W[2]}$\neq$\textsf{FPT} \cite{DF95}. Thus, one may further try to cope with both hardness of approximation and fixed parameter intractability, simultaneously, by the design of fixed parameter approximation algorithms. In this paper, we are interested in the recently emerging theory of \emph{fixed parameter inapproximability}, i.e., the subarea formed by the intersection of hardness of approximation and parameterized complexity.

The results in fixed parameter inapproximability can be broadly divided into two parts. First, we have the results obtained under non-gap assumptions such as \Wone, \eth, and \seth\ \cite{IP01,IPZ01}. The main difficulty addressed in these results is generating a gap, i.e., we focus on how to start from a hard problem with no gap, say \clique, and reduce it to a problem of interest while generating a non-trivial gap in the process. We elaborate below on these set of results.  The other collection of results in fixed parameter inapproximability are under gap
assumptions such as the Gap Exponential Time Hypothesis \cite{MR16,D16} and Parameterized Inapproximability Hypothesis \cite{LRSZ20}. In these results the gap is inherent in the assumption,
and the challenge is to construct gap-preserving reductions.  These results are not the focus of this paper and we shall not elaborate further on them, and the interested reader may see the recent survey of Feldman et al. \cite{FKLM20} for more details.

There are two main techniques to generate the gap in the fixed parameter inapproximability literature\footnote{One additional technique that we do not address in this paper is due to  Wlodarczyk  \cite{W20}, who recently used a variant  of the gap amplification via graph products technique to prove hardness of approximation for connectivity problems,
including the \steiner problem.}. \vspace{0.05cm}

\noindent\textbf{Threshold Graph Composition.} The Threshold Graph Composition (\TGC) technique was introduced in the breakthrough work of Lin \cite{L18} to show the \textsf{W[1]}-Hardness of the \biclique problem via the inapproximability of the \oneside problem. This technique was later used to prove the first non-trivial inapproximability result for the $k$-\setcover problem \cite{CL19}, and is also the current technique used to prove the state-of-the-art inapproximability result for the same \cite{L19}. Moreover, the result on the \oneside problem in Lin \cite{L18} was used by Bhattacharyya et al. \cite{BBEGKLMM19} as the starting point to prove inapproximability results for problems in coding theory such as  the $k$-\textsf{Minimum Distance} problem (a.k.a.\ $k$-\textsf{Even Set} problem) and  the $k$-\textsf{Nearest Codeword}  problem, and also for lattice problems such as the $k$-\textsf{Shortest Vector} problem and the  $k$-\textsf{Nearest Vector} problem. 

At a very high level, in \TGC we compose an instance of the input problem that has no gap, with a \emph{threshold graph} (that is constructed oblivious to the input instance; see Section ~\ref{sec:proof} for the definition), to produce a gap instance of the desired problem. The main challenge here is to find the right way to compose the input and the threshold graph, although we remark that even the task of constructing the requisite threshold graphs is in many cases non-trivial. \vspace{0.05cm}

\noindent\textbf{Distributed PCP Framework.} The Distributed PCP Framework (\DPF) was introduced in the seminal work of Abboud et al.\ \cite{ARW17} and laid the foundation for a series of inapproximability results in the area of fine-grained complexity \cite{R18,C18,AR18,CGLRR19}. The framework was used by Karthik et al.\  \cite{KLM19} to rule out fixed parameter approximation algorithms for the $k$-\setcover problem. En route, they also provided inapproximability for $k$-\maxcover, a parameterized variant of the label cover problem which was introduced and identified by Chalermsook et al. \cite{CCKLMNT20} as a key intermediate gap problem to be studied in parameterized complexity. At a very high level, in \DPF, given an instance of the input problem with no gap, one first designs a protocol for a specific communication problem formulated based on the input problem, and then extracts an instance of the gap $k$-\maxcover problem from the transcript of the protocol. Finally, one designs a gap preserving reduction from the gap $k$-\maxcover problem to the   gap problem of interest. %\vspace{0.05cm}

\noindent\textbf{Meeting Point of the Two Techniques.} While the aforementioned two techniques seems very different, rather surprisingly, they both yield very similar inapproximability results for the same problem: $k$-\setcover \cite{KLM19,L19}. This leads to the following natural question:

\begin{center}
\textit{Is there a unified technique to yield all inapproximability results\\ in parameterized complexity?}
\end{center}

More concretely, one can ask if it is possible to recover using the \DPF all the inapproximability results that are currently only obtained using the \TGC technique and vice-versa? In   \cite{KM19} the authors made the connection that if one could construct certain high dimensional extremal combinatorial objects then it is possible to prove the inapproximability of \oneside through \DPF (specifically by using the result of \cite{KLM19} on $k$-\maxcover). However, the construction of the desired combinatorial objects seem far from reach using current techniques.  In this paper, we look at the other direction of the question and address which results obtained in \DPF can now be obtained using the \TGC technique.

\subsection{Our Results}\label{sec:results}

Towards answering the raised question we use the \TGC technique to prove the following gap creating self reduction for the $k$-\maxcover problem. We start by describing the $k$-\maxcover problem (see Section~\ref{sec:problems} for a formal definition).

In the $k$-\maxcover problem we are given a bipartite graph $\Gamma=(V\cup W,E)$, where the vertex set is partitioned  as follows: $V=V_1\dot\cup\cdots \dot\cup V_k$ and $W=W_1\dot\cup\cdots \dot\cup W_t$. We denote by $|\Gamma|=|V|+|W|+|E|$. A labeling of $V$ is a $k$-tuple of vertices $(v_1,\ldots , v_k)\in V_1\times\cdots\times V_k$, and we say that  it \emph{covers} $W_j$ (for some $j\in[t]$) if $\exists w\in W_j$ which is a joint neighbor of all of $v_1,\ldots ,v_k$. We denote by $\maxcover(\Gamma)$ the maximal fraction of $W_j$ that can be simultaneously covered, i.e., $$\maxcover(\Gamma) :=  \max_{(v_1,\ldots , v_k)\in V_1\times\cdots\times V_k } \left(\Pr_{j \sim [t]}\left[ W_j \text{ is covered by } (v_1,\ldots , v_k)\right]\right).$$ 

It is easy to see that $k$-\maxcover is a parameterized variant of the classical label cover problem. Our first  result is a reduction, by only using an arbitrary error correcting code, from the exact $k$-\maxcover problem with a certain projection property, (which we call \emph{pseduo-projection} property, and is analogous to the standard projection property of label cover problem)  to  the gap  $k$-\maxcover problem.

\begin{theorem}[\maxcover Gap Creation using \TGC technique; Informal statement of Theorem~\ref{thm:gap}]\label{thm:gapintro}
Let $\Gamma_0$ be a $k$-\maxcover instance with the ``pseudo-projection'' property. Let $C$ be an error correcting code over alphabet set $[q]$ of block length $\ell$ and message length $\log_q |\Gamma_0|$.
Then there exists a reduction in time $O(\abs{\Gamma_0}\ell\cdot q^t)$ to a $k$-\maxcover instance $\Gamma$ of size $\abs{\Gamma_0}\ell\cdot q^t$. The new instance $\Gamma$ satisfies,
\begin{description}
\item[Completeness:] If $\maxcover(\Gamma_0)=1$, then $\maxcover(\Gamma)=1$,
\item[Soundness:] If $\maxcover(\Gamma_0)<1$, then $\maxcover(\Gamma)\le 1-\dist(C)$, 
\end{description}
where $\dist(C)$ is the relative distance of $C$.  
\end{theorem}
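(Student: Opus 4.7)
The plan is to follow Lin's Threshold Graph Composition paradigm, adapted to $k$-\maxcover: build $\Gamma$ from $\Gamma_0$ by keeping the left partition essentially intact (each $V'_i$ equal to $V_i$ or a light augmentation) and blowing up the right side to roughly $\ell$ times a power of $q$ parts, indexed by a coordinate $j \in [\ell]$ of $C$ together with a tuple $\vec{a}$ of expected codeword values at that coordinate. A new part $W'_{j, \vec{a}}$ would host vertices derived from $\Gamma_0$'s right vertices via the pseudo-projection data, together with an edge-level code-consistency test — essentially requiring that the chosen label $v_i$ satisfies $C(v_i)_j = a_i$. The goal is to bake the relative distance $\dist(C)$ of the code directly into the gap between completeness and soundness.

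Completeness would then be mechanical: a labeling $(v_1^*, \ldots, v_k^*)$ fully covering $\Gamma_0$ fixes, at each coordinate $j$, a codeword profile $\vec{a}_j^* = (C(v_1^*)_j, \ldots, C(v_k^*)_j)$. The composition has to be arranged so that every $W'_{j, \vec{a}}$ --- not only $W'_{j, \vec{a}_j^*}$ --- is covered by the lifted labeling, and this is what dictates the precise form of the gadget. The pseudo-projection hypothesis on $\Gamma_0$ is what makes this design possible, since it gives a canonical way to associate a codeword tuple with each $\Gamma_0$-witness $w$ and hence to route each $\vec{a}$ to a covering witness.

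Soundness is the technical heart. A labeling $(v'_1, \ldots, v'_k)$ of $\Gamma$ is a single global choice, so it determines, at each coordinate $j$, a unique induced profile $\vec{a}_j$ through the codewords $C(v'_i)$. If $\maxcover(\Gamma_0) < 1$, then viewing $(v'_1, \ldots, v'_k)$ as a $\Gamma_0$-labeling, some $W_{j_0}$ of $\Gamma_0$ is missed. I would then argue that this failure, combined with the code's distance property, forces at least a $\dist(C)$-fraction of the $\ell$ coordinates to be ``bad'', and that at each such coordinate the associated new right parts go uncovered, giving exactly the desired $1 - \dist(C)$ soundness.

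The main obstacle I anticipate is the decoding step in soundness: interpreting a global labeling of $\Gamma$ as a single attempted labeling of $\Gamma_0$, and ruling out that clever ``mixing'' across coordinates could beat the $1 - \dist(C)$ threshold. The pseudo-projection property should make this decoding canonical --- analogously to how standard projection in label cover collapses each right vertex to a unique label demand --- so the delicate part of the proof is identifying exactly how pseudo-projection enters the composition and is preserved by it. Once the decoding is in place, the distance property of $C$ does the rest, as two distinct codewords disagree on $\dist(C)\cdot \ell$ coordinates and this directly becomes the fraction of uncovered right parts.
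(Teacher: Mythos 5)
There is a genuine gap: your construction applies the code to the wrong side of the graph, and the two places you flag as ``to be arranged'' are exactly where the proof lives. You index the new right parts by $(j,\vec{a})$ and impose a consistency test of the form $C(v_i)_j=a_i$, i.e.\ you encode the \emph{left} labels. This creates the completeness tension you yourself notice (the good labeling would only pass the test in the parts whose profile equals $\vec{a}_j^*$, yet all parts must be covered), and you leave the gadget that resolves it unspecified. Worse, it derails soundness: a labeling of $\Gamma$ is a single choice $(v'_1,\ldots,v'_k)$, so encoding left labels never puts \emph{two distinct} codewords in play, and the distance of $C$ has nothing to act on. Your closing sentence (``two distinct codewords disagree on $\dist(C)\cdot\ell$ coordinates and this directly becomes the fraction of uncovered right parts'') presupposes a pair of distinct codewords that your setup does not produce; the ``decoding/mixing'' obstacle you anticipate is a symptom of this misdirection rather than a technicality to be patched later.

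The paper's construction differs in both respects. The code is applied to the \emph{right} vertices: each original right super-node $W_j$ is injectively matched into a block $B_j$ of codewords, and the new right super-nodes are only the $\ell$ coordinates, $A_1,\ldots,A_\ell$, where each $A_l$ contains all tuples in $\Sigma^t$ (one symbol per original super-node); a vertex $v\in V_i$ is joined to $a\in A_l$ iff there exist $\Gamma_0$-neighbors $w_1,\ldots,w_t$ of $v$ whose matched codewords have $l$-th coordinates equal to the entries of $a$. Because the ``profile'' lives inside the part as a vertex rather than in the index, completeness is immediate: the witnesses $w_1,\ldots,w_t$ of a perfect labeling determine, for each $l$, one vertex of $A_l$ adjacent to all $v_i$. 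For soundness, pseudo-projection is used to extract the missing pair of codewords: if $W_j$ is uncovered, then among the indices $i$ for which $V_i\to W_j$ is a function $f_i$, there must be $i_1,i_2$ with $f_{i_1}(v_{i_1})\neq f_{i_2}(v_{i_2})$ (otherwise the common projected witness would cover $W_j$); since each such $v_i$ has a \emph{unique} $W_j$-neighbor, every $\Gamma$-neighbor of $v_{i_1}$ (resp.\ $v_{i_2}$) in any $A_l$ must route through the codeword of $f_{i_1}(v_{i_1})$ (resp.\ $f_{i_2}(v_{i_2})$), so no mixing across coordinates is possible, and these two distinct codewords share a coordinate in at most $(1-\dist(C))\ell$ of the blocks $A_l$, giving $\maxcover(\Gamma)\le 1-\dist(C)$. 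This identification of the two distinct projected witnesses in the missed super-node, together with the routing argument forced by pseudo-projection, is the key step absent from your proposal.
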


It is clear that in the above theorem, by taking \emph{any} code with constant relative distance bounded away from 0, we already obtain a $k$-\maxcover instance with constant gap. We elaborate more on the proof technique in the next subsection of the introduction, but for now discuss the context of the above result. 

We use the above gap creation theorem to show inapproximability results for $k$-\maxcover based on \eth\ and \Wone. In particular, we show in Theorem~\ref{thm:eth-hardness} (resp.\ Theorem~\ref{thm:wone}) that assuming \eth\ (resp.\ \Wone), there is no algorithm running in time $n^{o(k)}$ (resp.\ $F(k)\cdot n^{O(1)}$ time, for some computable function $F$), that can decide if a $k$-\maxcover instance has completeness 1 or soundness at most $\left(\frac{1}{n}\right)^{\nicefrac{1}{\poly(k)}}$, where $n=|V|$. 
The proofs of Theorems~\ref{thm:eth-hardness}~and~\ref{thm:wone} follows by first showing \eth-Hardness and \textsf{W[1]}-Hardness of exact $k$-\maxcover having the pseudo-projection property, and then applying Theorem~\ref{thm:gapintro}.

\noindent\textbf{Comparison to \cite{KLM19}.} We remark that the proof technique in \cite{KLM19} also gives us Theorem~\ref{thm:gapintro} with identical parameters (i.e., even using \DPF, we can create gap in $k$-\maxcover as in Theorem~\ref{thm:gapintro} using an arbitrary error correcting code). Therefore, our contribution is about proving the same result using \TGC technique. See 
Remark~\ref{rem:KLM} for more details. 

Our next contribution is a more scalable version of Lin's result on gap $k$-\setcover.

\begin{theorem}[Scalable version of \cite{L19}; Informal statement of Theorem~\ref{thm:setcover}]\label{thm:setcoverintro}
There is a polynomial time algorithm taking an instance  $(\mathcal{S},\mathcal{U})$ of $k$-\emph{\setcover} problem, and an error correcting code $C$  over alphabet set $[q]$ of block length $\ell$ and message length $\log_q |\mathcal{S}|$, and outputs an instance  $\mathcal{U}',\mathcal{S}'$ of $k$-\emph{\setcover} problem, of size $|\mathcal{S}'| = |\mathcal{S}|, |\mathcal{U}'|= \ell \cdot |\mathcal{U}|^{q^{k}}$ such that the following holds. 
\begin{description}
	\item[Completeness:] If  $(\mathcal{S},\mathcal{U})$ has a cover of size $k$, then so does  $(\mathcal{S}',\mathcal{U}')$,
	\item[Soundness:]   If $(\mathcal{S},\mathcal{U})$ does not have a cover of size $k$ then $(\mathcal{S}',\mathcal{U}')$ does not have a cover of size $\sqrt[2k]{\frac{2}{1-\dist(C)}}$,   
	\end{description}
	where $\dist(C)$ is the relative distance of $C$.  
\end{theorem}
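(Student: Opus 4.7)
The plan is to adapt the \TGC approach by composing the input $k$-\setcover instance directly with a threshold gadget built from the codewords of $C$: I identify each $s \in \mathcal{S}$ with its codeword $c_s := C(s) \in [q]^\ell$, and let the distance of $C$ play the role of the combinatorial spread that creates the gap. Set $\mathcal{S}' := \mathcal{S}$ and let the new universe be $\mathcal{U}' := [\ell] \times \mathcal{U}^{[q]^k}$, whose elements are pairs $(i, g)$ with $i \in [\ell]$ and $g : [q]^k \to \mathcal{U}$. This yields $|\mathcal{U}'| = \ell \cdot |\mathcal{U}|^{q^k}$, as required. I declare $(i, g) \in s'$ iff there exists $\vec{a} \in [q]^k$ such that $(c_s)_i \in \{a_1, \ldots, a_k\}$ and $g(\vec{a}) \in s$.

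\textbf{Completeness and reduction of soundness.} For completeness, given a $k$-cover $s_1, \ldots, s_k$ of $\mathcal{U}$ and any $(i, g) \in \mathcal{U}'$, take $\vec{a}^* := ((c_{s_1})_i, \ldots, (c_{s_k})_i)$; then $g(\vec{a}^*) \in \mathcal{U}$ lies in some $s_{j^*}$, and $(c_{s_{j^*}})_i$ appears as the $j^*$-th entry of $\vec{a}^*$, so $(i, g) \in s'_{j^*}$. For soundness, suppose $s'_1, \ldots, s'_h$ cover $\mathcal{U}'$ while no $k$-subset of $\{s_1, \ldots, s_h\}$ covers $\mathcal{U}$. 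For each $i \in [\ell]$ and $\vec{a} \in [q]^k$, set $J_i(\vec{a}) := \{j : (c_{s_j})_i \in \{a_1, \ldots, a_k\}\}$. Unwinding the definition of $s'$, an element $(i, g)$ is uncovered iff $g(\vec{a}) \notin \bigcup_{j \in J_i(\vec{a})} s_j$ for every $\vec{a}$. Because every union of at most $k$ of the $s_j$'s is a strict subset of $\mathcal{U}$, as soon as I find a \emph{good} position $i$ -- one at which $|J_i(\vec{a})| \leq k$ for every $\vec{a}$ -- I can pick $g(\vec{a}) \in \mathcal{U} \setminus \bigcup_{j \in J_i(\vec{a})} s_j$ coordinate by coordinate and contradict the assumed coverage.

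\textbf{Good positions from code distance.} The key observation is that a position $i$ is bad only if some two of the $h$ codewords agree there: if all $h$ values $(c_{s_j})_i$ are distinct, then every $\vec{a} \in [q]^k$ captures at most $k$ of them and $|J_i(\vec{a})| \leq k$ automatically. Summing over the $\binom{h}{2}$ pairs and using that any two distinct codewords agree in at most $(1-\dist(C))\ell$ coordinates, the number of bad positions is at most $\binom{h}{2}(1-\dist(C))\ell$, which is strictly less than $\ell$ whenever $h(h-1) < 2/(1-\dist(C))$; this is implied by the hypothesis $h \leq \sqrt[2k]{2/(1-\dist(C))}$, so a good position exists. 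The step I expect to be the most delicate is the top-$k$ frequency argument certifying that distinctness at a coordinate really forces $|J_i(\vec{a})| \leq k$ for every $\vec{a} \in [q]^k$; once this is nailed down, the pairwise union bound against the distance of $C$ closes the proof.
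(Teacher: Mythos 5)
Your proposal is correct, and at bottom it is the same mechanism the paper uses (Lin-style threshold composition with a code-based gadget), but you take a more self-contained, inlined route than the paper does. The paper proves this statement by black-boxing Lin's composition lemma (Lemma~\ref{lem:reduction}) applied to the threshold graph $G_{C,k}$ of Definition~\ref{def:threshold}, and derives soundness from the collision number: a small cover of $\mathcal{U}'$ forces a set of codewords that collides in every coordinate, and Proposition~\ref{prop:cov} lower-bounds such a set by $\sqrt{2/(1-\dist(C))}$. Your argument merges both steps: your ``good position'' is exactly a coordinate where the $h$ chosen codewords are pairwise distinct (your top-$k$ frequency observation is the pigeonhole step that, in the paper's collision property, turns ``$k+1$ common neighbours'' into a same-coordinate agreement of two codewords), and your union bound $\binom{h}{2}(1-\dist(C))\ell<\ell$ is literally the proof of Proposition~\ref{prop:cov}. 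Two further differences are worth noting. First, your membership rule is symmetric, asking $(c_s)_i\in\{a_1,\dots,a_k\}$, whereas Lin's rule is part-indexed ($a_j=(c_s)_i$ for a set in the $j$-th part); your completeness and soundness verifications go through nonetheless, and in fact your argument already yields the stronger threshold $\sqrt{2/(1-\dist(C))}$ of the formal Theorem~\ref{thm:setcover}, of which the $2k$-th root claimed here is a weakening. Second, what the paper's factored route buys is the abstraction through $\Cov(C)$: its soundness is $\Cov(C)$, which for codes coming from perfect hash families equals $q+1$ and is what allows Corollary~\ref{cor:setcoverbest} to recover Lin's $\sqrt[k]{\log|\mathcal{U}'|/\log\log|\mathcal{U}'|}$ bound; your hard-wired pairwise bound cannot give that refinement, though it is more than sufficient for the statement you were asked to prove.
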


Again, it's clear that in the above theorem, starting from a set-system of $n$ sets on $O_k(\log n)$ size universe with no gap and by taking an arbitrary good code with relative distance greater than $1-\nicefrac{1}{(2k)^{2k}}$, we already obtain a $k$-\setcover instance with constant gap, and the universe size  is blown up to merely $(\log n)^{O_k(1)}$. 

In fact our soundness result is stronger than as stated above. We show that the soundness probability in Theorem~\ref{thm:setcoverintro} is actually at most the `collision number' of  $C$ (denoted by $\Cov(C)$) which informally is the smallest number of codewords needed to have collisions each coordinate (see Definition~\ref{def:cov}). We show that the soundness probability in the above theorem is actually $\Cov(C)$ and that  $\sqrt{\frac{2}{1-\dist(C)}}$ is merely a lower bound on $\Cov(C)$.  Thus using codes for which the value of $\Cov(C)$ is optimal we get inapproximability result matching the parameters of Lin \cite{L19} (see Corollary~\ref{cor:setcoverbest}). It is worth noting that the codes achieving optimality of $\Cov(C)$ are simply objects called perfect hash functions but we view them as codes (Proposition~\ref{prop:covtight}).

\noindent\textbf{Comparison to \cite{L19} and \cite{KLM19}.} We emphasize that our contribution in the above result is in the construction of threshold graphs (using arbitrary error correcting codes) and not in the composition of the threshold graph with the input instance. In particular, Lin showed how to construct one particular threshold graph (using universal sets), and we show how to build them in a general way using any code. On the other hand, comparing Theorem~\ref{thm:setcoverintro} to \cite{KLM19}, we note that it is possible to obtain time lower bounds using \DPF technique for $k$-\setcover instance  with constant gap, and the universe size  blown up to merely $(\log n)^{O_k(1)}$, but this cannot be done using arbitrary good codes as in Theorem~\ref{thm:setcoverintro}. In \cite{KLM19} the code used for generating gap is sensitive to the starting hypothesis. Therefore even  to get constant gap under \seth\ using  \cite{KLM19}, we would still need to use the highly non-trivial algebraic geometric codes. 

\subsection{Our Techniques}\label{sec:proof}

Our main technical contribution is the construction of a class of threshold graphs.  A specific threshold graph was constructed in  \cite{L19} using extremal combinatorial objects called universal sets, and in this work, we show how to construct them in general by starting from  just error correcting codes.   

We note that there are several notions of threshold graphs in literature. A common aspect in all constructions is the threshold property: we want a base graph such that a certain subgraph appears many times in the base graph, whereas a slightly bigger (or different) version of this subgraph does not appear (or appears very few times) in the base graph. We now formally define the threshold graph that we use in this work.

\begin{definition}[Thereshold Graph]\label{def:thresholdintro}
A bipartite graph $G=(A\dot\cup B, E)$, with $A=A_1\dot\cup\cdots \dot\cup A_\ell, B=B_1\dot\cup\cdots \dot\cup B_k$ has the threshold property with collision parameter $h>k$ and soundness parameter $\delta$ if
\begin{description}
\item[Completeness:] For every $b_1,\dots b_k\in B_1\times\cdots\times B_k$ and every $i\in[\ell]$ there exist $a\in A_i$ which is a common neighbor of $b_1,\dots b_k$. 
\item[Collision Property:] Let $X\subseteq B$ such that for every $i\in[\ell]$ we have that exists $a\in A_i$ which is a common neighbor of (at least) $k+1$ vertices in $X$. Then $|X|\ge h$.
\item[Soundness:] For every $j\in[t]$ and every distinct $b\neq b'\in B_j$, for all except $(1-\delta)\ell$ of the parts $i\in[\ell]$, we have that $\neighb(b)\cap\neighb(b')\cap A_i = \emptyset$.
\end{description}
\end{definition}

Notice that a threshold graph $G$ should contain many $(k,\ell)$ bicliques, one for each $k$-tuple in $B_1\times\cdots\times B_k$. The same $G$ should not contain \emph{any} $(k+1,\ell')$ biclique (for some $\ell'\ll \ell$) when the right side has at most one vertex from each $A_i$ ($i\in[\ell]$); this is exactly the Threshold property. 

The soundness property is also similar, we require that for every $b_j\in B_j,b_r\in B_r$, there is a joint neighbor $a_i\in A_i$ for \emph{every} $i\in [\ell]$. For $b,b'\in B_j$, $G$ should not contain a  common neighbor in almost all of $A_i$'s.

We show a construction taking any error correcting code and transforming it into a threshold graph. The construction appears in Section~\ref{sec:threshold}.
\begin{theorem}[Threshold Property; Informal statement of Lemma \ref{lem:threshold}] \label{them:threshold-informal}
Let $C:\Sigma^r\rightarrow  \Sigma^\ell$ be a code of distance $\delta$, then for every integer $t\in\mathbb{N}$, there is polynomial time algorithm creating a graph $G=(A\cup B,E)$ of size $O(t\ell\abs{\Sigma}^t\abs{\Sigma}^r)$, which has the Threshold property with collision parameter $\sqrt{\frac{2}{1-\delta}}$ and soundness parameter $\delta$.
\end{theorem}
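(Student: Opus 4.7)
The plan is to build $G$ directly from the code: identify $B_j$ with $\Sigma^r$ (each vertex is a message for $C$) for each $j\in[t]$, identify $A_i$ with $\Sigma^t$ (each vertex is a candidate $t$-tuple, one symbol per part) for each $i\in[\ell]$, and connect $b\in B_j$ to $a=(\sigma_1,\ldots,\sigma_t)\in A_i$ iff $C(b)_i=\sigma_j$. The total vertex and edge count comes out to $O(t\ell|\Sigma|^{t+r})$, matching the claimed size bound.

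Completeness is immediate: for any $(b_1,\ldots,b_t)\in B_1\times\cdots\times B_t$ and any $i\in[\ell]$, the vertex $a=(C(b_1)_i,\ldots,C(b_t)_i)\in A_i$ is by construction a common neighbor. For the soundness property, if $b\neq b'\in B_j$ share any neighbor $a\in A_i$, then reading off the $j$-th coordinate of $a$ forces $C(b)_i=\sigma_j=C(b')_i$; by the relative distance $\delta$ of $C$, this happens for at most $(1-\delta)\ell$ coordinates, as required.

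The substantive step is the collision bound. Fix $X\subseteq B$ satisfying the hypothesis. For each $i\in[\ell]$ pick $a^i\in A_i$ together with $t+1$ vertices of $X$ all neighboring $a^i$. By pigeonhole on $t+1$ elements dropped into the $t$ parts $B_1,\ldots,B_t$, two of these chosen vertices, say $b$ and $b'$, lie in a common part $B_j$; by the construction, $C(b)_i=a^i_j=C(b')_i$, so the codewords of $b$ and $b'$ agree at coordinate $i$. Hence the set $P$ of unordered pairs $\{b,b'\}\subseteq X$ lying in a common $B_j$ satisfies $\bigcup_{\{b,b'\}\in P}\{i\in[\ell]:C(b)_i=C(b')_i\}=[\ell]$.

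To conclude, each pair in $P$ contributes at most $(1-\delta)\ell$ coordinates by the distance of $C$, so $\ell\leq|P|\cdot(1-\delta)\ell$, giving $|P|\geq 1/(1-\delta)$. Combining with the trivial bound $|P|\leq\binom{|X|}{2}$ yields $|X|(|X|-1)\geq 2/(1-\delta)$, hence $|X|>\sqrt{2/(1-\delta)}$, as desired. The conceptually non-obvious step is recognizing that pigeonhole converts the ``$(t+1)$-wise joint neighbor'' hypothesis into a purely pairwise codeword-agreement statement; once this translation is made, the remaining argument reduces to a routine union-bound/counting step that uses only the distance $\delta$ of $C$ and no further structural property of the code.
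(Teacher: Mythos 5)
Your proof is correct and is essentially the paper's own argument: the construction and the completeness and soundness steps coincide with Definition~\ref{def:threshold} and Lemma~\ref{lem:threshold}, and your pigeonhole-plus-pairwise-agreement counting for the collision bound is exactly the paper's collision-property argument combined with the lower bound of Proposition~\ref{prop:cov}. The only cosmetic difference is that the paper factors the collision step through the intermediate quantity $\Cov(C)$ (giving the stronger conclusion $|X|\ge \Cov(C)$, which is exploited later with perfect-hash-family codes), whereas you inline that counting to obtain the stated $\sqrt{2/(1-\delta)}$ bound directly.
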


The parameters in the informal statement are not optimal, and in fact to match the bounds of \cite{L19} we need the parameters in the formal statement. 

Using our construction when the code $C$ is a random error correcting code (i.e. matching each string $w\in\Sigma^r$ to a random string in $\Sigma^\ell$) gives optimal parameters for large enough $\ell$ (see Section~\ref{sec:rand}). On the other hand, taking a completely random graph does not give a good threshold graph matching our requirements, because the soundness property is very unlikely to happen (see Remark~\ref{rem:random}).

\noindent\textbf{Composition Step of Threshold Graph with Input Graph.} We close this subsection by giving some intuition on how the threshold graph given in Definition~\ref{def:thresholdintro} is used in \TGC. First, we rewrite our initial problem (with no gap) as a problem on some bipartite graph $G_0(U\dot\cup W,E_0)$. This reformulaztion is explicit in the definition of the \maxcover problem and for the \setcover problem we consider the bipartite graph formed between universe and collection of input subsets (the edges representing the membership of a universe element in a subset). We then construct a threshold graph $G(A\dot\cup B,E)$ where we have some canonical bijection between $B$ and $W$. Therefore we now have the tripartite graph $H(U\dot\cup W\dot\cup A,E')$, where the edge set between $U$ and $W$ is $E_0$ and the edge set between $W$ (i.e., $B$) and $A$ is $E$. Given $H$, the goal is then to create a bipartite graph $G_1$ between $U$ and $A$, where the edges depend on $H$ in some way. The resulting graph $G_1$ has to be designed to be a gap instance of the starting problem, where the gap is obtained by using the threshold properties of $G$. This step of constructing $G_1$ from $H$ is the most non-trivial part of the \TGC technique. In Theorem~\ref{thm:gapintro}, we indeed provide a novel way to combine the \maxcover problem having the pseudo-projection property and no gap, with the threshold graph of Definition~\ref{def:thresholdintro}, to obtain an instance of \maxcover problem with gap. For Theorem~\ref{thm:setcoverintro}, we simply use the composition provided by Lin~\cite{L19}.

\subsection{Organization of the Paper}
The paper is organized as follows. In Section~\ref{sec:prelim} we define the problems and hypotheses of relevance to this paper, and also recall some basic notions in coding theory. In Section~\ref{sec:threshold}, we show how to construct threshold graphs from arbitrary error correcting codes. In Sections~\ref{sec:gap}~and~\ref{sec:setcover} resp.,   we show how to compose these threshold graphs with $k$-\maxcover and $k$-\setcover instances resp., in order to create a gap. 
%In Section~\ref{sec:maxcover}, we use the gap creation result from the previous section to prove conditional time lower bounds for gap $k$-\maxcover. 
Finally in Section~\ref{sec:open}, we discuss an  important open problem stemming from our work.

\section{Preliminaries}\label{sec:prelim}

For a graph $G=(V,E)$ and a vertex $v\in V$, we denote by $\neighb(v)\subset V$ the set of neighbors of $v$.

\subsection{Problems}\label{sec:problems}

\paragraph{\bf $3$-SAT.}
In the $3$-SAT problem, we are given a \cnf formula $\varphi$ over $n$ variables $x_1,\dots x_n$, such that each clause contains at most $3$ literals. Our goal is to decide if there exist an assignment to $x_1,\dots x_n$ which satisfies $\varphi$.

\paragraph{$(k,h)$-\setcover problem.}
For every $k,h\in\mathbb{N},k<h$, in the $(k,h)$-\setcover problem we receive a universe $\mathcal{U}$ and $k$ collection of sets $\mathcal{S}_1,\ldots ,\mathcal{S}_k$ over $\mathcal{U}$. The goal is to distinguish between the two cases:
\begin{itemize}
	\item There exists $(S_1,\dots S_k)\in\mathcal{S}_1\times \cdots \times \mathcal{S}_k$ such that $\bigcup_{i=1}^k S_i = \mathcal{U}$.
	\item For every $\mathcal{S}'\subset \underset{i\in [k]}\bigcup\mathcal{S}_i$, if  $\abs{\mathcal{S}'}< h$, then $\underset{{S\in\mathcal{S}'}}{\bigcup}S \neq \mathcal{U}$.
\end{itemize}
Notice that the $(k,k+1)$-\setcover problem has no gap, and we are interested in creating a gap for the \setcover problem, starting from a no gap instance.

\paragraph{$k$-\textsf{MaxCover} problem.}
\begin{sloppypar}We now recall the  $k$-\maxcover problem introduced by Chalermsook~et~al.~\cite{CCKLMNT20}.  It is in fact a parameterized version of the label cover problem, where each label cover vertex corresponds to a $k$-\maxcover super-node.\end{sloppypar}

The $k$-\maxcover instance $\Gamma$ consists of a bipartite graph $G=(V\dot\cup W, E)$ such that $V$ is partitioned into $V=V_1\dot\cup \cdots \dot\cup V_k$ and $W$ is partitioned into $W=W_1\dot\cup \cdots \dot\cup W_\ell$. We sometimes refer to $V_{i}$'s and $W_j$'s as {\em left super-nodes} and {\em right super-nodes} of $\Gamma$, respectively. 

A solution to $k$-\maxcover is called a {\em labeling},
which is a subset of vertices $v_1\in V_1,\dots v_k\in V_k$.
We say that a labeling $v_1,.\dots v_k$ {\em covers} a right super-node $W_{i}$, if there exists a vertex $w_{i} \in W_{i}$ which is a joint neighbor of all $v_1,\dots v_k$, i.e. $(v_j,w_i)\in E$ for every $j\in[k]$.
We denote by $\maxcover(\Gamma)$ the maximal fraction of right super-nodes that can be simultaneously covered, i.e.
\begin{align*}
\maxcover(\Gamma) = \frac{1}{\ell} \left(\max_{\text{labeling } v_1,\dots v_k} |\{i \in [\ell] \mid W_i \text{ is covered by } v_1,\dots v_k\}|\right). 
\end{align*}

In exact $k$-\maxcover, the input is a maxcover instance $\Gamma$ and the the goal is to decide whether $\maxcover(\Gamma)=1$ or not.

In the $\epsilon$-gap $k$-\maxcover, on input $\Gamma$ the goal is to distinguish between the two cases:
\begin{itemize}
	\item $\maxcover(\Gamma)=1$.   
	\item $\maxcover(\Gamma)<\epsilon$.  
\end{itemize}

%\begin{definition}[Exact $k$-\maxcover]
%Given a \emph{\maxcover} instance $\Gamma(G=(V\dot\cup W,E))$, the \emph{exact $k$-\maxcover} problem is to distinguish  between the following two cases:
%\begin{description}
%\item[Completeness:] $\emph{\maxcover}(\Gamma)=1$.   
%\item[Soundness:] $\emph{\maxcover}(\Gamma)<1$.  
%\end{description}
%\end{definition}
%
%We are interested in \emph{gap} maxcover instance.
%\begin{definition}[Gap $k$-\maxcover]
%	Given a \emph{\maxcover} instance $\Gamma(G=(V\dot\cup W,E))$, the $\epsilon$-\emph{gap $k$-\maxcover} problem is to distinguish  between the following two cases:
%	\begin{description}
%		\item[Completeness:] $\emph{\maxcover}(\Gamma)=1$.   
%		\item[Soundness:] $\emph{\maxcover}(\Gamma)<\epsilon$.  
%	\end{description}
%\end{definition}

\paragraph{$k$-\textsf{Clique} problem}
In the $k$-clique problem we receive a graph $H = (V,E)$ with $\abs{V}=n$, and our goal is to decide if $H$ contains a clique of size $k$, i.e. there exists $v_1,\dots v_k\in V$ such that for every $i\neq j\in [k]$, $(v_i,v_j)\in E$.
\subsection{Hypotheses}

\begin{hypothesis}[{W[1]$\neq$ FPT}]
For any computable function $F:\mathbb{N}\rightarrow\mathbb{N}$, there is no $F(k)\poly(n)$-time algorithm which solves the $k$-$\mathsf{Clique}$ problem over $n$ vertices.
\end{hypothesis}

\begin{hypothesis}[Exponential Time Hypothesis (\eth)~\cite{IP01,IPZ01,Tovey84}] \label{hyp:eth}
There exists $\epsilon > 0$ such that no algorithm can solve 3-\SAT on $n$ variables in time $O(2^{\epsilon n})$. Moreover, this holds even when restricted to formulae in which each variable appears in at most three clauses.
\end{hypothesis}

Note that the original version of the hypothesis from~\cite{IP01} does not enforce the requirement that each variable appears in at most three clauses. To arrive at the above formulation, we first apply the Sparsification Lemma of~\cite{IPZ01}, which implies that we can assume without loss of generality that the number of clauses $m$ is $O(n)$. We then apply Tovey's reduction~\cite{Tovey84} which produces a 3-\cnf instance with at most $3m + n = O(n)$ variables and every variable occurs in at most three clauses. This means that the bounded occurrence restriction is also without loss of generality.

\subsection{Error Correcting Codes}\label{sec:ECC}
\begin{definition}[Distance]
Let $\Sigma$ be finite set and $\ell\in\mathbb{N}$, the distance between $x,y\in \Sigma^\ell$ is
\[ \dist(x,y) = \frac{1}{\ell}\abs{\sett{i\in[\ell]}{x_i\neq y_i}}. \]
\end{definition}
\begin{definition}[Error Correcting Code]
Let $\Sigma$ be finite set, for every $\ell\in\mathbb{N}$ a subset $C:\Sigma^r\rightarrow\Sigma^\ell$ is an error correcting code with message length $r$, block length $\ell$ and relative distance $\delta$ if for every $x,y\in \Sigma^r$, $\dist(C(x),C(y))\geq \delta$. We denote then $\dist(C)=\delta$.
\end{definition}
We sometimes abuse notations and treat an error correcting code as its image, i.e. $C\subset \Sigma^\ell$.

For the purpose of this paper, we introduce a new notion on codes called \emph{collision number}.
\begin{definition}[Collision Number]\label{def:cov}
Let $C\subset \Sigma^\ell$, we say that a subset $S\subset C$ is colliding on coordinate $i\in[\ell]$ if there exists $ x,y\in S$ such that $x_i=y_i$. The collision number of a code $C$, $s=\Cov(C)$, is the smallest integer $s$, for which there exists a set $S\subset C,\abs{S}=s$ which collides on \emph{every} coordinate in $[\ell]$. 
\end{definition}

\begin{prop}\label{prop:cov}
For every error correcting code $C\subseteq \Sigma^\ell$ of relative distance $\delta$ we have,
\[\sqrt{\frac{2}{1-\delta}}\leq \Cov(C)\leq \abs{\Sigma}+1.\]
\end{prop}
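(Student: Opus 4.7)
The proof is short and splits cleanly into the two inequalities.

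\textbf{Upper bound.} The plan is to use a direct pigeonhole argument. I would take any subset $S \subseteq C$ of size $|\Sigma|+1$ (assuming $|C| \geq |\Sigma|+1$; otherwise the upper bound is vacuous). For each coordinate $i \in [\ell]$, the $|\Sigma|+1$ symbols $\{x_i : x \in S\}$ lie in $\Sigma$, so by pigeonhole two of them must coincide, i.e., $S$ collides on coordinate $i$. This holds for every $i$, so $\Cov(C) \leq |\Sigma|+1$.

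\textbf{Lower bound.} For the lower bound I would use a union-bound / double-counting argument. Let $S \subseteq C$ be any set of size $s = \Cov(C)$ witnessing the collision property. For each unordered pair $\{x,y\} \subseteq S$ with $x \neq y$, define the agreement set
\[
A_{x,y} = \{i \in [\ell] : x_i = y_i\}.
\]
Since $C$ has relative distance $\delta$, we have $|A_{x,y}| \leq (1-\delta)\ell$ for every such pair. By the collision property, every coordinate $i \in [\ell]$ lies in some $A_{x,y}$, so
\[
\ell \;=\; \Bigl|\bigcup_{\{x,y\} \subseteq S} A_{x,y}\Bigr| \;\leq\; \sum_{\{x,y\} \subseteq S} |A_{x,y}| \;\leq\; \binom{s}{2}(1-\delta)\ell.
\]
Dividing by $\ell$ gives $\binom{s}{2} \geq \frac{1}{1-\delta}$, i.e., $s(s-1) \geq \frac{2}{1-\delta}$, and since $s^2 \geq s(s-1)$ we obtain $s \geq \sqrt{\tfrac{2}{1-\delta}}$, as desired.

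\textbf{Obstacles.} There is essentially no obstacle here; both directions are one-line counting arguments. The only minor subtlety is that the upper bound is only meaningful when $|C| \geq |\Sigma|+1$, which is implicit in the statement (and true in all interesting regimes of the paper, where codes are used to build threshold graphs). No additional machinery beyond the definition of relative distance and a union bound is required.
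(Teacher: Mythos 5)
Your proof is correct and takes essentially the same route as the paper: the upper bound is the same pigeonhole argument, and the lower bound is the same double-counting over pairwise agreement sets bounded by $(1-\delta)\ell$, followed by $\binom{s}{2} \geq \tfrac{1}{1-\delta}$ and $s^2 \geq s(s-1)$. The only (harmless) difference is that you explicitly flag the vacuity of the upper bound when $|C| \leq |\Sigma|$.
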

\begin{proof}
	Fix a code $C\subset \Sigma^\ell$, we prove the bounds. %If I remove this sentence it doesn't compile, latex bug?
	\paragraph{Upper Bound}
	 Let $S\subseteq C$ be any set of cardinality $\abs{\Sigma} + 1$, and let $i\in[\ell]$ be any coordinate. Since there are $\abs{\Sigma}$ possible values for the $i$th coordinate and $\abs{S}\geq\abs{\Sigma}+1$, by the pigeonhole principle there must be two $x,y\in S$ such that $x_i=y_i$.
	
	\paragraph{Lower bound}
	Let $S\subseteq C$ be a set which has a collision on every coordinate $i\in[\ell]$. Let $x,y\in S$ and  let $L_{x,y}\subseteq [\ell]$ be defined as follows. $$L_{x,y}:=\{i\in[\ell]\mid x_i=y_i\}.$$ Since $x$ and $y$ are codewords of $C$ we have $|L_{x,y}|\le (1-\delta)\cdot \ell$. On the other since $S$ is a covering subset we have $$ \sum_{\substack{x,y\in S\\ x\neq y}}|L_{x,y}|\ge \ell.$$ This implies that $\binom{|S|}{2}\cdot |L_{x,y}|\ge \ell$ or in other words, $\binom{|S|}{2}\cdot (1-\delta)\ge 1$. After rearrangement, we have that $|S|\ge \sqrt{\frac{2}{1-\delta}}$.
\end{proof}

In this work we use Reed Solomon code, although in fact we only use the distance of the code and not any other properties.

\begin{theorem}[Reed-Solomon Codes \cite{RS60}]\label{thm:rs}
For every prime power $q$, and every $r \leq  q$, there exists a code  of message length $r$, block length $q$, and relative distance $1-\frac{r}{q}$.
\end{theorem}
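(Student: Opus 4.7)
The plan is to realize the claimed code as the classical Reed--Solomon construction via polynomial evaluation. Since $q$ is a prime power, there exists a finite field $\mathbb{F}_q$ of order $q$; set $\Sigma = \mathbb{F}_q$ and enumerate its elements as $\alpha_1, \ldots, \alpha_q$. A message $m = (m_0, \ldots, m_{r-1}) \in \mathbb{F}_q^r$ will be interpreted as the coefficient vector of the polynomial $p_m(X) = \sum_{i=0}^{r-1} m_i X^i \in \mathbb{F}_q[X]$, which has degree strictly less than $r$. Define the encoding $C : \mathbb{F}_q^r \to \mathbb{F}_q^q$ by
\[
C(m) \;=\; \bigl(p_m(\alpha_1),\, p_m(\alpha_2),\, \ldots,\, p_m(\alpha_q)\bigr).
\]
By construction, the message length is $r$ and the block length is $q$, as required.

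The only nontrivial step is bounding the distance, and this follows from the standard root-count fact for polynomials over a field. Given two distinct messages $m \neq m'$, consider the difference polynomial $p_m - p_{m'} \in \mathbb{F}_q[X]$. This polynomial is nonzero (since $m \neq m'$ determines distinct coefficient vectors) and has degree at most $r-1$. Over any field, a nonzero polynomial of degree at most $r-1$ has at most $r-1$ roots, so the number of coordinates $i \in [q]$ with $p_m(\alpha_i) = p_{m'}(\alpha_i)$ is at most $r-1$. Equivalently, $C(m)$ and $C(m')$ differ in at least $q - (r-1)$ coordinates, giving
\[
\dist\bigl(C(m), C(m')\bigr) \;\geq\; \frac{q-r+1}{q} \;\geq\; 1 - \frac{r}{q},
\]
which is the claimed bound.

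There is no real obstacle here: the construction requires only the existence of $\mathbb{F}_q$ (guaranteed by $q$ being a prime power and the condition $r \leq q$ so that $r$ evaluation points suffice to make the map injective) and the elementary degree-versus-root inequality for univariate polynomials. The reason I would present the proof this way, rather than invoking a black-box coding-theory citation, is that the excerpt only uses the distance property of $C$ downstream, so this short direct argument is self-contained and matches exactly what the later applications (Theorems~\ref{thm:gapintro} and \ref{thm:setcoverintro}) need.
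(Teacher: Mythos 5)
Your proof is correct and is exactly the classical polynomial-evaluation argument for Reed--Solomon codes; the paper simply cites \cite{RS60} for this fact without proof, and your degree-versus-roots argument is the standard derivation behind that citation. Nothing is missing, and the observation that only the distance property is used downstream is accurate.
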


\section{Construction of Threshold Graphs}\label{sec:threshold}

We define a threshold graph, essentially as in \cite{L19} (with an additional soundness property) which is used to create gap instances in later sections.

\begin{definition}[Threshold Graphs from Error Correcting Codes]\label{def:threshold}
For every error correcting code $C:\Sigma^r\rightarrow  \Sigma^\ell$ and integer $t \in\N$ we define the threshold graph $G_{C, t} = (A\cup B,E)$ as follows. The vertex sets are  $A=A_1\dot\cup A_2\dot\cup\cdots \dot \cup A_\ell$, $\forall i\in[\ell], |A_i|=\abs{\Sigma}^t$ and   $B=B_1\dot\cup B_2\dot\cup\cdots \dot \cup B_t$, $\forall j\in[t], |B_j|=\abs{\Sigma}^r$. For every $j\in[t]$, we associate $B_j$ with the set of all codewords in $C$, i.e., each vertex in $B_j$ is a unique codeword in the image of $C$ . Similarly, for every $i\in[\ell]$, we associate $A_i$ with the set $\Sigma^t$. We have an edge between $b\in B_j$ and $a \in A_i$ if and only if $C(b)_i=(a)_j$. 
\end{definition}

Various variants of threshold graphs were studied in Theoretically Computer Science as early as in the works of Babai et al.\ \cite{BGKRSW96}, and also  were later used by Lin \cite{L18} in his work on \biclique. 
We emphasize that the above definition of threshold graphs are a novel contribution of this paper and their properties below are as in \cite{L19} albeit that he  constructed one specific threshold graph for a certain range of parameters using very non-trivial objects such as universal sets, whereas we provide a generic way to construct them using relatively basic objects such as error correcting codes.

\begin{lemma}[Threshold Property]\label{lem:threshold}
Let $C:\Sigma^r\rightarrow  \Sigma^\ell$ be a code, let $t\in\mathbb{N}$, then the following holds for the graph $G_{C,t}$ defined above.
\begin{description}
\item[Completeness:] For every $(b_1,\ldots ,b_t)\in B_1\times \cdots \times B_t$, and every $i\in [\ell]$, there is a unique vertex $a\in A_i$ which is a common neighbor of $b_1,\dots ,b_t$.  
\item[Collision Property:] Let $X\subseteq B$ such that for every $i\in[\ell]$ we have that exists $a\in A_i$ which is a common neighbor of (at least) $t+1$ vertices in $X$. Then $|X|\ge \Cov(C)$.
\item[Soundness:] For every $j\in[t]$ and every distinct $b\neq b'\in B_j$, for all except $(1-\dist(C))\ell$ of the parts $i\in[\ell]$, we have that $\neighb(b)\cap\neighb(b')\cap A_i = \emptyset$.
\end{description}
\end{lemma}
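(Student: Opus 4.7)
The plan is to first translate the edge rule into a clean dictionary, and then derive each of the three items separately. The dictionary I will set up: for $b \in B_j$ and $a \in A_i$ (viewing $a$ as a tuple in $\Sigma^t$), we have $(b,a) \in E$ iff $(a)_j = C(b)_i$. Equivalently, for fixed $i$, the neighbors of $a$ in $B_j$ are exactly those codewords whose $i$-th symbol equals the $j$-th coordinate of the tuple $a$. All three properties fall out of this single reformulation.

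For \textbf{Completeness}, given $(b_1,\dots,b_t)$ and $i \in [\ell]$, the common neighbors in $A_i$ are the tuples $a \in \Sigma^t$ with $(a)_j = C(b_j)_i$ for every $j \in [t]$. There is clearly one and only one such tuple, namely $a^\star := (C(b_1)_i, \dots, C(b_t)_i)$. For \textbf{Soundness}, fix $j \in [t]$ and distinct $b,b' \in B_j$; a common neighbor in $A_i$ exists iff some $a \in \Sigma^t$ has $(a)_j = C(b)_i = C(b')_i$, i.e., iff $C(b)$ and $C(b')$ agree at coordinate $i$. By the relative distance bound, this happens for at most $(1-\dist(C))\ell$ indices $i$, so on all remaining coordinates $\neighb(b) \cap \neighb(b') \cap A_i = \emptyset$.

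For the \textbf{Collision Property}, which is the conceptual heart, I would collapse $X$ to the underlying set of codewords $S := \{C(b) : b \in X\} \subseteq C$, noting that $|X| \ge |S|$ (several copies of the same codeword across different $B_j$'s count once in $S$). Fix any $i \in [\ell]$ and the promised $a \in A_i$ with at least $t+1$ neighbors in $X$. Since $B$ is partitioned into only $t$ parts, pigeonhole produces two distinct neighbors in the same part $B_j$; these correspond to two distinct codewords $c \neq c' \in S$ both satisfying $c_i = (a)_j = c'_i$. Hence $S$ collides on coordinate $i$, and since $i$ was arbitrary, $S$ collides on every coordinate of $[\ell]$. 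By Definition~\ref{def:cov}, this forces $|S| \ge \Cov(C)$, and therefore $|X| \ge \Cov(C)$.

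The one step that is more than bookkeeping is the pigeonhole inside the Collision Property: the number $t$ of parts of $B$ is calibrated precisely so that $t+1$ common neighbors of a single $a$ must contain two in some $B_j$, which by the edge rule is exactly a collision of the code at coordinate $i$. Everything else -- completeness and soundness -- is a direct translation of the edge condition into, respectively, existence/uniqueness of a tuple and the distance property of $C$.
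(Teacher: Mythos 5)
Your proposal is correct and follows essentially the same argument as the paper: the unique common neighbor $(C(b_1)_i,\dots,C(b_t)_i)$ for completeness, the pigeonhole over the $t$ parts of $B$ yielding a collision of two distinct codewords at each coordinate for the collision property, and the relative-distance bound for soundness. Your explicit note that distinct vertices within the same $B_j$ correspond to distinct codewords is a welcome touch of precision, but the proof is the paper's proof.
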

The soundness means that two distinct vertices in $B_j$ don't have any joint neighbor in almost all of the partitions $A_i$ (given that $C$ is a code with large distance).
\begin{proof}
Let $G_{C,t}$ be the threshold graph defined above.
\begin{description}
\item[Completeness:] Fix $(b_1,\dots b_t)\in B_1\times\cdots\times B_t$ and $i\in[\ell]$. Let $x_1,\dots x_t$ be the codewords that are associated to $b_1,\dots b_t$ (may have repetitions). Then the vertex $a=((x_1)_i,(x_2)_i\dots (x_t)_i)\in A_i$ is connected to $b_1,\dots b_t$ by definition. Furthermore, it is the only common neighbor. Let $a'\neq a\in A_i$, and let $j\in [t]$ be a location in which $(a)_j\neq (a')_j$, then $a'$ is not connected to $b_j$, since $(a')_j\neq (x_j)_i = (a)_j$.
%\item[Soundness:]  Let $S\subset B$ be a set of size $k+1$, then exists $j\in[k]$ such that $\abs{S\cap B_j}>1$. Denote $b_j,b'_j\in S\cap B_j$ two such vertices. The two vertices are associated to two different codewords $x\neq x'\in C_t$. By definition, $\dist(x,x')\geq\dist(C_t)$, let $D = \sett{i\in[\ell]}{x_i\neq x'_i}$ be the set of coordinates they differ on. For every $i\in D$, $b_j,b_j'$ don't have a common neighbor in $A_i$. A vertex $a\in A_i$ is connected to $b_j$ only if $x_i = a_j$, and this implies that $x'_i\neq a_i$ so $b'_j$ is not connected to $a$. 
%
%The set $S$ might not contain one vertex in each partition $B_j$, it might have $k+1$ vertices all in the same part. If for some $j'\in[k]$, $S\cap B_j' = \emptyset$, then $S$ might have more than a single neighbor on some $A_i$'s. In the worse case, $S$ can all be in a single part $B_j$. In this case, it might be connected to $\abs{\Sigma}^{k-1}$ of the vertices for each $i\notin D$. The total number of neighbors then is bounded by: 
%\[ (\ell -\abs{D})q^{k-1} = \ell q^{k-1}(1-\dist(C_t)). \]

\item[Collision Property:] Let $X\subset B$ be a set such that for every $i\in[\ell]$, there exist $a_i\in A_i$ such that $\abs{\neighb(a_i)\cap X} \geq t+1$, we prove that $\abs{X}\geq\Cov(C)$.

For every $i\in[\ell]$, we know that $\abs{\neighb(a_i)\cap X}\geq t+1$, and $B$ is divided to $t$ parts, so
there must be $j\in[t]$ such that $\neighb(a_i)$ contains at least two vertices in $X\cap B_j$, formally $\abs{\neighb(a_i)\cap X\cap B_j}\geq 2$. Denote these vertices by $b, b'$, and let $x,x'$ be the codewords associated to $b,b'$. Since both vertices are connected to $a_i$, $(x)_i = (x')_i = (a_i)_j$.

Let $S$ be the set of codewords which are the encoding of elements in $X$: 
\[ S = \sett{x\in C}{x \text{ is an encoding of some }b\in X}. \]
From above, for every $i\in [\ell]$ there must be $x,x'\in S$ such that $x_i=x'_i$, so the set $S$ collides on every coordinate, and $\abs{S}\geq\Cov(C_t)$.

Every element $b\in X$ contributed at most a single element to $S$, so $\abs{X}\geq \abs{S}\geq \Cov(C)$.
\item[Soundness:] Fix some $j\in[t]$ and $b\neq b'\in B_j$. Let $x\neq x'$ be the codewords associated to $b,b'$ respectively. Let $i\in[\ell]$ be a partition in $A$ in which $b,b'$ has a joint neighbor $a$. By definition, this means that $(x)_i = (x')_i = (a)_j$. As $x,x'$ are different codewords in $C$, this can happen for at most $(1-\dist(C))\ell$ of the indices in $[\ell]$. \qedhere
\end{description}
\end{proof}

\subsection{Randomized Constructions}\label{sec:rand}

We show that a random code $C:\Sigma^r\rightarrow\Sigma^\ell$ has good distance and collision number when $\ell$ is large enough.
A random error correcting code $C:\Sigma^r\rightarrow \Sigma^\ell $ is a mapping in which for every $x\in \Sigma^r$,  $C(x)$ is chosen uniformly at random in $\Sigma^\ell$. 

\begin{cor}
	For every $t,r,q\in\mathbb{N}$ and $\ell \geq r2^q$, let $C:[q]^r\rightarrow[q]^\ell$ be a random code, then the  graph $G_{C,t}$ has the threshold property with collision property $\Cov(C)\geq\frac{q}{10}$ and soundness $\Delta(C)\geq 1-\frac{2}{q}$.
\end{cor}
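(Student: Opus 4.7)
The plan is to take a uniformly random code $C:[q]^r\to[q]^\ell$ (meaning each $C(x)$ is independent and uniform in $[q]^\ell$) and show that with positive probability it satisfies both $\dist(C)\ge 1-2/q$ and $\Cov(C)\ge q/10$. The corollary then follows by applying Lemma~\ref{lem:threshold} to this $C$. I will bound each failure event by probability $1/3$ and finish with a union bound.

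For the distance, I will fix any two distinct messages $x\ne y\in[q]^r$. Since $C(x)_i$ and $C(y)_i$ are independent uniform in $[q]$, the number of agreement coordinates is distributed as $\mathrm{Bin}(\ell,1/q)$ with mean $\ell/q$. A multiplicative Chernoff bound gives that the probability of exceeding $2\ell/q$ agreements is at most $e^{-\ell/(3q)}$. Union-bounding over the at most $q^{2r}$ pairs, the total failure probability is at most $q^{2r}e^{-\ell/(3q)}$, which is well under $1/3$ under the hypothesis $\ell\ge r\cdot 2^q$ (indeed $\ell=\Omega(qr\log q)$ would already be enough).

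For the collision number, I observe that the event ``some $S\subseteq C$ with $|S|<q/10$ collides on every coordinate'' is monotone under taking supersets, so it suffices to rule out subsets of a single size $s:=\lceil q/10\rceil$. For a fixed $S$ of this size, the probability that a given coordinate has no collision equals $p:=\prod_{j=0}^{s-1}(1-j/q)\ge (9/10)^{q/10}$, and independence across the $\ell$ coordinates yields $\Pr[S\text{ collides everywhere}]\le(1-p)^\ell\le e^{-p\ell}$. Taking a union bound over the at most $q^{rs}$ size-$s$ subsets of messages, the overall failure probability is at most $\exp\bigl(rs\ln q-\ell(9/10)^{q/10}\bigr)$, which will be well under $1/3$ since the hypothesis $\ell\ge r\cdot 2^q$ forces the negative term to dominate: the required condition reduces to $2^q\gtrsim s\ln q\cdot(10/9)^{q/10}$, which holds with exponential slack because $\ln 2\approx 0.693$ is much larger than $\tfrac{1}{10}\ln(10/9)\approx 0.0105$.

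The only delicate point is choosing the constants in the collision-number bound; the huge gap between the two exponential rates $2^q$ and $(10/9)^{q/10}$ leaves substantial room, so no real obstacle arises (any small cases $q=O(1)$ can be dealt with by absorbing into constants or verified by direct inspection). A final union bound over the two failure events produces a single code $C$ with both properties, and the threshold, collision, and soundness statements for $G_{C,t}$ then follow immediately from Lemma~\ref{lem:threshold}.
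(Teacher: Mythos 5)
Your proposal is correct and follows essentially the same route as the paper: a Chernoff-plus-union bound over the $q^{2r}$ pairs for the distance, a per-coordinate no-collision probability of at least $(9/10)^{q/10}$ with a union bound over the at most $q^{rq/10}$ small subsets for the collision number, and then an application of Lemma~\ref{lem:threshold}. The only (immaterial) differences are your use of monotonicity to fix a single subset size and slightly different Chernoff constants.
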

The proof follows directly from Lemma~\ref{lem:threshold}, and the properties of the random code from claims \ref{claim:rand-cov} and \ref{claim:rand-dist}.

\begin{claim}\label{claim:rand-cov}
	Fix $\ell \geq r 2^{\abs{\Sigma}}$, and let $C:\Sigma^r\rightarrow \Sigma^\ell $ be a random code, then with high probability, $\Cov(C)\geq\frac{\abs{\Sigma}}{10}$.
\end{claim}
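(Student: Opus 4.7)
I would prove the claim by a union bound over $s$-subsets of the $q^r$ messages, where $q = \abs{\Sigma}$ and $s = \lceil q/10 \rceil$. To reduce to sets of size exactly $s$, I use the following extension: if $\Cov(C) < s$, then some $S \subseteq C$ with $\abs{S} < s$ collides on every coordinate, and adjoining arbitrary further codewords from $C$ preserves every existing collision, producing a set of exactly size $s$ that still collides at every coordinate. Hence $\Pr[\Cov(C) < s]$ is bounded above by the probability that some $s$-subset of codewords collides at every coordinate.

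For a fixed $s$-subset of messages, the $s$ codewords are independent uniform vectors in $\Sigma^\ell$, so at each coordinate we see $s$ independent uniform symbols from $[q]$. Letting $p$ denote the birthday-collision probability for $s$ uniform samples in $[q]$, by independence across coordinates the probability of having a collision at every one of the $\ell$ coordinates is exactly $p^\ell$. Combining the standard estimate $\prod_{j=0}^{s-1}(1-j/q) \geq e^{-s(s-1)/q}$ (valid because $(s-1)/q \leq 1/2$) with $\log p \leq -(1-p)$ then yields
\[
p^\ell \;\leq\; \exp\!\bigl(-\ell \cdot e^{-s^2/q}\bigr) \;=\; \exp\!\bigl(-\ell\cdot e^{-q/100}\bigr).
\]

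Pairing this with the crude bound $\binom{q^r}{s} \leq q^{rs}$ and the hypothesis $\ell \geq r\cdot 2^q$ gives
\[
\Pr[\Cov(C) < s] \;\leq\; \exp\!\bigl(rs \ln q - \ell \cdot e^{-q/100}\bigr) \;\leq\; \exp\!\bigl(rq \ln q / 10 - r\cdot e^{q(\ln 2 - 1/100)}\bigr),
\]
and once $q$ exceeds a small absolute constant, $r\cdot e^{0.68 q}$ dwarfs $rq\ln q/10$, yielding doubly-exponential decay; for the few small $q$ with $q/10 < 1$ the claim is vacuous. The main step is thus the birthday-collision estimate at $s\approx q/10$, well past the birthday threshold $\sqrt{q}$, together with the observation that the exponential rate $\ln 2 \approx 0.693$ from the hypothesis comfortably exceeds the rate $1/100$ coming from the collision estimate. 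This numerical balance is the only nontrivial point; once it is verified, the union bound is routine.
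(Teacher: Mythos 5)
Your proposal is correct and follows essentially the same route as the paper's proof: fix a small subset of messages, bound the per-coordinate collision probability by a birthday-type estimate, use independence across the $\ell$ coordinates, and union bound over at most $q^{rs}$ subsets, comparing $rq\log q$ against $\ell$ times an exponentially small (in $q$) no-collision probability. The only differences are cosmetic: you normalize to subsets of size exactly $\lceil q/10\rceil$ via monotonicity and use $1-p\ge e^{-s^2/q}$ where the paper bounds each factor by $9/10$ to get $(9/10)^{q/10}$, which is the same $e^{-\Theta(q)}$ quantity.
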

\begin{proof}
	Denote $\abs{\Sigma}=q$. Fix a set $S\subset \Sigma^r,\abs{S}\leq\frac{q}{10}$. We say that $S$ has no collision on coordinate $i\in [\ell]$, if there are no $x,y\in S$ such that $C(x)_i=C(y)_i$. 
	
	Fix an arbitrary coordinate $i$, we lower bound the probability for $S$ to have no collision at $i$. We enumerate over all elements in $S$, $S=x_1,x_2,\dots x_{|S|}$, and for each $j\in\abs{S}$ take the probability that $(x_j)_i\neq(x_{t})_i$ for all $t<j$. Assuming that there was no collision on $t<j$, and that $(x_j)_i$ is distributed uniformly in $[q]$, this equals exactly $\frac{q-j+1}{q}$, hence:
	\[ \Pr[S \text{ has no collisions at } i] = 1\cdot\frac{q-1}{q}\frac{q-2}{q}\cdots\frac{q-\abs{S}+1}{q}\geq \left(\frac{9}{10}\right)^{\frac{q}{10}}. \]
	
	The probability that there is some $i\in[\ell]$ such that $S$ has collisions at $i$ is at most $(1-\left(\frac{9}{10}\right)^{\frac{q}{10}})^\ell$. 
	
	By union bound over all $S$, there are at most $(q^r)^{\frac{q}{10}}$ sets $S$:
	\[ \Pr\left[\Cov(C)>\frac{q}{10}\right]\leq (q^r)^{\frac{q}{10}}\left(1-\left(\frac{9}{10}\right)^{\frac{q}{10}}\right)^\ell \leq e^{rq \log q} e^{-\ell 0.9^q}. \]
	In our case $\ell \geq r2^q$, so $\ell 0.9^q \gg rq\log q $, so it holds with high probability.
\end{proof}

\begin{claim}\label{claim:rand-dist}
	Let $\ell \geq 4r \abs{\Sigma}\ln\abs{\Sigma}$ and $C:\Sigma^r\rightarrow \Sigma^\ell $ be a random error correcting code, then with high probability $\Delta(C)\geq 1-\frac{2}{\abs{\Sigma}}$.
\end{claim}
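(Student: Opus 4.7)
The plan is a standard Chernoff plus union-bound argument. Write $q = |\Sigma|$. Fix any two distinct messages $x \neq y \in \Sigma^r$. Since $C$ is a random code, the codewords $C(x)$ and $C(y)$ are independent and each coordinate is uniform in $\Sigma$, so the events $E_i := \{C(x)_i = C(y)_i\}$ for $i \in [\ell]$ are mutually independent Bernoullis with $\Pr[E_i] = 1/q$. Let $Z_{x,y} = \sum_{i=1}^\ell \one_{E_i}$ count the agreement coordinates; then $\mu := \Exp[Z_{x,y}] = \ell/q$ and
\[
\dist(C(x), C(y)) = 1 - \frac{Z_{x,y}}{\ell}.
\]
Thus $\dist(C(x),C(y)) < 1 - 2/q$ iff $Z_{x,y} > 2\ell/q = 2\mu$.

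The next step is to apply the multiplicative Chernoff bound: for independent $\{0,1\}$-valued variables with mean $\mu$,
\[
\Pr\bigl[Z_{x,y} \geq 2\mu\bigr] \;\leq\; \left(\frac{e}{4}\right)^{\mu} \;\leq\; e^{-\mu/4} \;=\; e^{-\ell/(4q)}.
\]
This gives an exponentially small failure probability for a single pair.

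The final step is a union bound over all unordered pairs $\{x,y\}\subset\Sigma^r$, of which there are at most $q^{2r}$. The probability that some pair violates the distance bound is at most
\[
q^{2r}\cdot e^{-\ell/(4q)} \;=\; \exp\!\bigl(2r\ln q \,-\, \ell/(4q)\bigr),
\]
which is $o(1)$ once $\ell/(4q) \geq 3r\ln q$, a condition comfortably implied by the hypothesis $\ell \geq 4rq\ln q$ (absorbing the constant $\ln q$ factors into the bound as needed; if sharper constants are wanted one can replace the elementary bound above by $(e/4)^\mu$ directly and read off that the hypothesis suffices). Hence with high probability $\dist(C(x),C(y)) \geq 1 - 2/q$ for every pair, i.e., $\Delta(C) \geq 1 - 2/|\Sigma|$.

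There is no real obstacle — the only thing requiring mild care is tracking the constants in the Chernoff exponent so that the union bound absorbs the $q^{2r}$ factor within the stated $\ell \geq 4r|\Sigma|\ln|\Sigma|$ regime; this is the same kind of routine calibration that appears in the analogous calculation in Claim~\ref{claim:rand-cov}.
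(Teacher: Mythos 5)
Your proof is the same argument as the paper's: fix a pair of codewords, apply a Chernoff bound to the number of agreeing coordinates, and union-bound over the at most $q^{2r}$ pairs. The only real issue is the constant bookkeeping at the end. The hypothesis $\ell \geq 4rq\ln q$ gives $\ell/(4q) \geq r\ln q$, not $3r\ln q$ as you assert, so with your per-pair bound $e^{-\ell/(4q)}$ the union bound is $\exp\left(2r\ln q - \ell/(4q)\right)$, which at the boundary $\ell = 4rq\ln q$ equals $\exp(r\ln q)$ and does not tend to zero; even the tight exponent $(e/4)^{\ell/q} = e^{-(2\ln 2 -1)\ell/q}$ leaves the exponent positive there, since $4(2\ln 2 - 1)\approx 1.54 < 2$, so the parenthetical fallback you offer does not rescue the stated constant either. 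That said, this is not a conceptual gap, and the paper's own proof has the identical looseness: it uses the per-pair bound $e^{-\ell/(8q)}$, for which the union bound only closes once $\ell$ exceeds roughly $16\,rq\ln q$. Both your argument and the paper's prove the claim verbatim once the constant $4$ in the hypothesis is replaced by a sufficiently larger constant (any $c > 2/(2\ln 2 - 1) \approx 5.2$ works with the tight Chernoff exponent), and the qualitative ``with high probability'' conclusion used downstream is unaffected.
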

\begin{proof}
	Denote $\abs{\Sigma}=q$, in this proof we treat $C$ as the image of the code, i.e. $C\subset \Sigma^\ell$.
	For every $x,y\in C$, $\Exp[\Delta(x,y)]=1-\frac{1}{q}$.
	By a Chernoff bound:
	\[ \Pr[\dist(x,y)\leq 1-\frac{2}{q}] \leq e^{-\frac{1}{8q}\ell}. \]
	Preforming union bound over all pairs $x,y\in C$ (there are $q^{2r}$ such pairs):
	\[ \Pr\left[\min_{x\neq y\in C}\{\Delta(x,y)\}<1-\frac{2}{q}\right] \leq q^{2r}e^{-\frac{1}{8q}\ell}, \]
	with $\ell \geq 4r q \ln q$, with high probability the distance is at least $1-\frac{2}{q}$.
\end{proof}

\begin{remark}\label{rem:random}
We remark that if instead of taking a random error correcting code $C$, we would choose $G$ to be a random graph in the Erd\"os-R\'enyi model (i.e. each edge appears with probability $p$) it would not be possible to get the soundness property. This is because for graphs sampled from the Erd\"os-R\'enyi model, there is no distinction between vertices in the same $B_j$ and vertices in different ones. It is unlikely that for each $b_1,\ldots, b_k\in B_1\times\cdots\times B_k$ there is a full bipartite graph with some $a_1,\ldots ,a_\ell$, but for \emph{two} $b,b'\in B_j$ there is no full bipartite graph with some  $a_1,\ldots ,a_\ell$. 
On the other hand, it is possible to get the collision property for a random graph in the Erd\"os-R\'enyi model, albeit with slightly worse parameters than random codes.
 \end{remark}
 
We now show that there are deterministic codes for which we can obtain improvement above Proposition~\ref{prop:cov} on the collision number and are in fact optimal. We start by defining perfect hash families which have received considerable attention in literature (for example see \cite{FK84,FKS84, AABCHNS92,N94,alon1995color}).

  	\begin{definition}[Perfect Hash Family] \label{def:HM}For every $N,\ell,q\in \mathbb{N}$, we say that ${H}:=\{h_i:[N]\to [q]\mid i\in[\ell]\}$ is a $[N,\ell]_q$-Perfect hash family if for every  subset $T$ of $[N]$, where $|T|\le q$, there exists some $i\in[\ell]$ such that:
\begin{align}
 \forall x,y\in T, x\neq y,\ h_i(x)\neq h_i(y).\label{eq:HM}
\end{align}
Moreover, the computation time of $H$ is defined to be the time needed to output the $\ell\times N$ matrix with entries in $[q]$ whose $(i,x)^{\text{th}}$ entry is simply $h_{i}(x)$ (for $h_i\in H$).
\end{definition}
In other words, $H$ is a $[N,\ell]_q$-Perfect hash family if for every $T\subset [N],|T|\leq q$, there exists a hash function $h\in H$ such that $h$ on inputs in $T$ gets $|T|$ distinct values.

\begin{prop}[Collision number of Perfect hash family]\label{prop:covtight}
Let $N,\ell,q\in \mathbb{N}$, and let ${H}$ be a $[N,\ell]_q$-Perfect hash family.
Then $H$ can be interpreted as a code over alphabet $[q]$ of message length $\log_{q} N$, block length   $\ell$ and collision number $q+1$.
\end{prop}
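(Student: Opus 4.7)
The plan is to define the code $C$ directly from the perfect hash family in the natural way and then check the two sides of the collision number bound.

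Concretely, identify the message space $[q]^{\log_q N}$ with $[N]$, and define the encoding $C : [N] \to [q]^\ell$ by
\[ C(x)_i := h_i(x) \quad \text{for every } x \in [N] \text{ and } i \in [\ell]. \]
This gives a code of block length $\ell$ over alphabet $[q]$ with $N$ possible messages, matching the claimed parameters. Applying Definition~\ref{def:HM} to any pair $T=\{x,y\}$ with $x\neq y$ (which satisfies $|T|=2\le q$ in the nontrivial regime) yields some index $i\in[\ell]$ with $h_i(x)\neq h_i(y)$, i.e.\ $C(x)\neq C(y)$, so $C$ is genuinely injective and the image truly contains $N$ distinct codewords.

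For the collision number, the upper bound $\Cov(C)\le q+1$ is immediate from Proposition~\ref{prop:cov}, which only uses that the alphabet has size $q$. It therefore suffices to prove the matching lower bound $\Cov(C)\ge q+1$. For this, let $S\subseteq C$ be any subset of codewords with $|S|\le q$, and let $T\subseteq[N]$ be its preimage under $C$; by injectivity $|T|=|S|\le q$. The perfect hash property applied to $T$ produces an index $i\in[\ell]$ such that $h_i$ is injective on $T$, which by the definition of $C$ means that the values $\{C(x)_i : x\in T\}$ are pairwise distinct. In other words, $S$ does not collide on coordinate $i$, so $S$ cannot witness the collision number. Since this holds for every $S$ of cardinality at most $q$, we conclude $\Cov(C)\ge q+1$.

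There is no real obstacle here; the argument is essentially a translation between the two definitions, and the only thing one must be a little careful about is observing that the perfect hash property, already at $|T|=2$, forces $C$ to be injective so that the code has exactly $N$ distinct codewords (rather than a multi-set of size $N$). Combining the two bounds gives $\Cov(C)=q+1$, completing the proof.
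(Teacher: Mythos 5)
Your proposal is correct and follows essentially the same route as the paper: interpret $H$ as the code $C(x)_i = h_i(x)$, get $\Cov(C)\ge q+1$ directly from the perfect hashing property applied to the (pre)image set of any $\le q$ codewords, and get $\Cov(C)\le q+1$ from the pigeonhole bound of Proposition~\ref{prop:cov}. The only difference is that you make the injectivity of $C$ (via the $|T|=2$ case of Definition~\ref{def:HM}) explicit, which the paper leaves implicit; this is a harmless refinement rather than a different argument.
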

\begin{proof}
Label the hash functions in $H$ using $[\ell]$. We think of $H$ as a code as follows. For every $x\in [N]$ and $i\in[\ell]$, the $x^{\text{th}}$ codeword's $i^{\text{th}}$ coordinate is the image of the $i^{\text{th}}$ hash function in $H$ on the input $x$. 

To see the claim on the collision number of the aforementioned code, suppose for the contrary assume that there exists $T\subseteq [N]$ of cardinality $q$ such that  for every $i\in [\ell]$ there exists $x,y\in T$ such that $x_i=y_i$. This contradicts \eqref{eq:HM}.
\end{proof}

\section{$k$-\maxcover: Gap Creation by Threhosld Graph Composition}\label{sec:gap}
In this section we show a gap creation technique for $k$-\maxcover with a projection property we call \emph{pseudo projection}. This property is an analog of the projection property of label cover.

\begin{definition}[Pseudo Projection]
A $k$-\maxcover instance $\Gamma=(V\dot\cup W,E)$, $V=V_1\dot\cup \cdots \dot\cup V_k$ and $W=W_1\dot\cup \cdots \dot\cup W_t$ has the pseudo projection property if for every $i\in[k],j\in[t]$, one of the two holds:
\begin{itemize}
\item Every $v\in V_i$ has exactly one neighbor $w\in W_j$.
\item There is a full bipartite graph between $V_i$ and $W_j$.
\end{itemize}
\end{definition}

Below is the main result of this section on gap creation in $k$-\maxcover.

\begin{theorem}\label{thm:gap}
Let $\Gamma_0 = (V\dot\cup W,E_0)$ be a $k$-\maxcover instance with the pseudo projection property, with $V=V_1,\dots V_k$, $W = W_1,\dots W_t$. Let $C:\Sigma^r\rightarrow\Sigma^\ell$ be an error correcting code such that $\abs{\Sigma}^r \geq \abs{W_j}$ for every $j\in[t]$.
Then there exists a reduction in time $O(\abs{\Gamma_0}\ell\abs{\Sigma}^t)$ to a $k$-\maxcover instance $\Gamma(V\dot\cup A, E)$ of size $\abs{V}\ell\abs{\Sigma}^t$ with  $V$ divided into $k$ parts, and $A$ into $\ell$ parts. The new instance $\Gamma$ satisfies
\begin{itemize}
	\item If $\maxcover(\Gamma_0)=1$, then $\maxcover(\Gamma)=1$.
	\item If $\maxcover(\Gamma_0)<1$ , then $\maxcover(\Gamma)\le 1-\dist(C)$.  
\end{itemize}
\end{theorem}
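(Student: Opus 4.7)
The plan is to compose $\Gamma_0$ with the threshold graph $G_{C,t}=(A\dot\cup B,E_G)$ of Definition~\ref{def:threshold}, taking the same $t$ as the number of right super-nodes of $\Gamma_0$. Since $|B_j|=|\Sigma|^r\ge|W_j|$, I can identify each $W_j$ with a subset of $B_j$, so that every $w\in W_j$ has an associated codeword $C(w)\in\Sigma^\ell$. The output instance will be $\Gamma=(V\dot\cup A,E)$ on the same left partition $V_1,\ldots,V_k$ as $\Gamma_0$ and with the right partition $A_1,\ldots,A_\ell$ inherited from the threshold graph. For each pair $(i,j)\in[k]\times[t]$ of ``unique neighbor'' type, I write $w_{i,j}(v)\in W_j$ for the unique neighbor of $v\in V_i$. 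The edge relation I propose is: $v\in V_i$ is connected to $a\in A_{i'}$ (viewed as a tuple in $\Sigma^t$) iff $C(w_{i,j}(v))_{i'}=a_j$ for every $j$ in the unique-neighbor case (no constraint for $j$'s in the full-bipartite case). This is well-defined and the size bound $|V|\cdot\ell\cdot|\Sigma|^t$ is immediate, since each $v$ has at most $|\Sigma|^t$ neighbors in each $A_{i'}$.

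For completeness, given a labeling $(v_1,\ldots,v_k)$ of $\Gamma_0$ that covers every $W_j$ via some witness $w_j\in W_j$, I will exhibit a common neighbor $a^\star\in A_{i'}$ for each $i'\in[\ell]$ by setting $a^\star_j:=C(w_j)_{i'}$. In the unique-neighbor case, one must have $w_{i,j}(v_i)=w_j$, so $C(w_{i,j}(v_i))_{i'}=C(w_j)_{i'}=a^\star_j$, placing $(v_i,a^\star)\in E$. This covers every $A_{i'}$, yielding $\maxcover(\Gamma)=1$. Note that this is exactly where the ``unique common neighbor'' property from Lemma~\ref{lem:threshold} is being exploited: the codeword-coordinate construction of the threshold graph gives a canonical common neighbor candidate from any choice of one $w_j$ per $W_j$.

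For soundness I argue the contrapositive. Suppose some labeling $(v_1,\ldots,v_k)$ covers strictly more than $(1-\dist(C))\ell$ of the $A_{i'}$'s, with witnesses $a(i')\in A_{i'}$. Fix any $j\in[t]$ and any two indices $i,i''\in[k]$ both in the unique-neighbor case for $W_j$. The edge relations force $C(w_{i,j}(v_i))_{i'}=a(i')_j=C(w_{i'',j}(v_{i''}))_{i'}$ on every covered coordinate $i'$; since this is a strictly greater than $(1-\dist(C))$-fraction of coordinates, the minimum distance of $C$ forces $w_{i,j}(v_i)=w_{i'',j}(v_{i''})$. Hence all $v_i$'s in the unique-neighbor case for $W_j$ share a common vertex $w_j\in W_j$ (pick $w_j$ arbitrarily otherwise). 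By the pseudo-projection property, this $w_j$ is a joint neighbor of $v_1,\ldots,v_k$ in $\Gamma_0$, so $(v_1,\ldots,v_k)$ witnesses $\maxcover(\Gamma_0)=1$, contradicting the hypothesis.

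The main technical point is designing the edge relation so that (i) the uniqueness clause in Lemma~\ref{lem:threshold} converts $W_j$-witnesses into $A_{i'}$-witnesses in the completeness direction, and (ii) the distance of $C$ converts agreement on many $A_{i'}$'s back into equality of $W_j$-witnesses in the soundness direction. The pseudo-projection property is what makes both directions clean: in the full-bipartite case the coordinate $a_j$ imposes no constraint, and in the unique-neighbor case each $v\in V_i$ determines a single codeword per $j$, so the $t$ coordinates of $a$ can simultaneously encode the claims coming from all right super-nodes without interaction.
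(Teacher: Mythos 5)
Your proposal is correct and follows essentially the same route as the paper: compose $\Gamma_0$ with the threshold graph $G_{C,t}$ via an injective matching of $W_j$ into $B_j$, use the unique common neighbor in each $A_{i'}$ for completeness, and use the code distance (equivalently, the soundness bullet of Lemma~\ref{lem:threshold}) together with the pseudo-projection property for soundness. The only differences are cosmetic: your edge relation drops the (unused) constraint in the full-bipartite case, and you phrase soundness contrapositively where the paper argues directly from an uncovered $W_j$ with two conflicting projections.
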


\begin{proof}
Let $G_{C,t}$ be the threshold graph from Definition \ref{def:threshold}, with the error correcting code $C$ and integer $t$.  We compose $\Gamma_0$ with $G_{C,t}$ to create our new instance $\Gamma$.

For every $j\in [t]$, we arbitrarily match every vertex in $w_j\in W_j$ to a vertex $b_j\in B_j$ without repetitions. This can be done since $\abs{W_j}\leq \abs{B_j}$. The new instance $\Gamma$ is defined as follows:
\begin{itemize}
\item The vertex sets are $V$ from $\Gamma_0$, and $A$ from $G_{C,t}$.
\item A vertex $v\in V_i$ is connected to $a\in A_j$ if there exists $w_1\in W_1,\dots w_t\in W_t$ such that $v$ is connected to $w_1,\dots w_t$ in $\Gamma_0$, and $a$ is connected to the matching $b_1,\dots b_t$ in $G_{C,t}$.
\end{itemize}

We prove the reduction parameters and correctness.
\paragraph{Runtime and Size} The size of $\Gamma$ is bounded by $\abs{V}\abs{A} = \abs{V}\ell \abs{\Sigma}^t$.
For the runtime, to create the edges in $\Gamma$, for each $v\in V$ and $a\in A$ we go over all their neighbors in $\Gamma_0,G_{C,t}$ and check if every $W_j$ is covered by a joint neighbor. This can be done in a linear time in $\abs{W}$. Therefore, the runtime of the reduction is bounded by $\abs{V}\abs{A}\abs{W}= O(\abs{\Gamma_0}\ell\abs{\Sigma}^t)$.

\paragraph{Completeness} Assume $\maxcover(\Gamma_0)=1$, let $v_1,\dots v_k\in V_1\times\cdots V_k$ be a covering set, and let $w_1,\dots w_t\in W_1\times\cdots \times W_t$ be the vertices covered by $v_1,\dots v_k$, i.e. there is a full bipartite graph between $v_1,\dots v_k$ and $w_1,\dots w_t$.

Let $b_1,\dots b_t$ the matching vertices to $w_1,\dots w_t$. By the definition of the threshold graph, for every $l\in[\ell]$ there exists a vertex $a_l\in A_l$ which is a common neighbor of $b_1,\dots b_t$. From the composition definition, for every $i\in[k],l\in[\ell]$, $a_l$ is a neighbor of $v_i$. Therefore, for every $l\in[\ell]$, $A_l$ is covered by $v_1,\dots v_k$.

\paragraph{Soundness}  Assume $\maxcover(\Gamma_0)<1$. Fix any labeling $v_1\in V_1,\dots v_k\in V_k$ of $\Gamma_0$. Since $\Gamma_0$ is not satisfiable, $v_1,\dots v_k$ do not cover all of $W$. Let $j\in[t]$ be a super-node not covered by $v_1,\dots v_k$. 

Define $S\subset [k]$ to be all indices $i$ such that there is a function from the set $V_i$ to $W_j$. For every $i\in S$, denoted this function by $f_i:V_i\rightarrow W_j$. The instance $\Gamma_0$ has the pseudo projection property, so for every $i'\notin S$ there is a full bipartite graph between $V_{i'}$ and $W_j$. Since $W_j$ is not covered by $v_1,\dots v_k$, there must be $i_1,i_2\in S$ such that $f_{i_1}(v_{i_1})\neq f_{i_2}(v_{i_2})$. Denote $w=f_{i_1}(v_{i_1})$ and $w'=f_{i_2}(v_{i_2})$.
 
Let $b\neq b'\in B_j$ be the vertices matched to $w,w'\in W_j$. By our composition, any neighbor $a\in\neighb( v_{i_1})$ in $\Gamma$ has to be a neighbor of $b$ in $G_{C,t}$ (since $w=b$ is the only neighbor of $v_{i_1}$ in $W_j$). Similarly, every neighbor of $v_{i_2}$ in $\Gamma$ has to be a neighbor of $b'$ in $G_{C,t}$.

By Lemma \ref{lem:threshold}, the threshold graph $G_{C,t}$ is such that for all except $(1-\dist(C))\ell$ of the indices $l\in \ell$, $\neighb(b)\cap\neighb(b')\cap A_l = \emptyset$. From above, for all these $l$'s, $v_{i_1},v_{i_2}$ don't have a common neighbor in $A_l$, and $A_l$ is uncovered by $v_1,\ldots v_k$.
\end{proof}

Using the above theorem we can prove strong inapproximability results for $k$-\maxcover based on \eth\ and \Wone. The proofs of both the theorems essentially follow from the ideas given in \cite{KLM19} and we defer them to Appendix~\ref{sec:maxcover}.

\begin{theorem}\label{thm:eth-hardness}
	Assuming \eth, there is no $n^{o(k)}$ time algorithm that given a $k$-\maxcover instance $\Gamma(G=(V\dot\cup W,E))$, where $V$ is divided into $k$ parts, can decide between the following two cases:
	\begin{description}
		\item[Completeness:] $\emph{\maxcover}(\Gamma)=1$.   
		\item[Soundness:] $\emph{\maxcover}(\Gamma)\le n^{-O(\frac{1}{k^3})}$.  
	\end{description}
\end{theorem}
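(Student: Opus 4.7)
The plan is to reduce $3$-SAT to exact pseudo-projection $k$-\maxcover, and then apply Theorem~\ref{thm:gap} with a Reed--Solomon code to produce the polynomial gap. Starting from a $3$-\cnf formula $\varphi$ on $n$ variables in bounded-occurrence form (so the number of clauses is $O(n)$), I would partition the variables into $k$ blocks $X_1,\ldots,X_k$ of size $n/k$, let $V_i$ be the set of all $2^{n/k}$ assignments to $X_i$, and group the clauses of $\varphi$ into $t$ batches of $O(n/t)$ clauses each. For each batch $B_j$ let $W_j$ be the set of satisfying partial assignments to the variables appearing in $B_j$; the edges between $V_i$ and $W_j$ encode consistency on the common variables. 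By refining each $W_j$ so that every $V_i$ sees it either through a function (when that block intersects $B_j$) or as a complete bipartite graph (when it does not), the pseudo-projection property holds, and $\maxcover(\Gamma_0)=1$ iff $\varphi$ is satisfiable. The resulting instance has size $N=2^{O(n/k)+O(n/t)}$; with $t$ chosen as a slowly growing function of $n$ and $k$ this is $2^{O(n/k)}$, so a $N^{o(k)}$-time algorithm would contradict \eth.

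Next I would invoke Theorem~\ref{thm:gap} on $\Gamma_0$ with a Reed--Solomon code from Theorem~\ref{thm:rs} of alphabet size $q$, message length $r=\lceil\log_q\max_j|W_j|\rceil$, and relative distance $\delta=1-(r-1)/q$ as close to $1$ as possible. One application produces a $k$-\maxcover instance of size $|V|\cdot\ell\cdot q^t$ and soundness $1-\delta=(r-1)/q$. Choosing $q$ polynomially large in $N$ while keeping $t$ small (so that the blowup $q^t$ stays polynomial in $N$) already yields a gap of roughly $1/q=N^{-\Omega(1/k)}$. To sharpen this to $N^{-\Omega(1/k^3)}$ while keeping the total size polynomial in $N$, I plan to iterate the composition $\Theta(k)$ times. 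The composition with a threshold graph $G_{C,t}$ preserves the left-side partition and places on the right side a set of vertices indexed by coordinates and codeword values, so by a canonical choice of the next code the pseudo-projection property can be re-established and the reduction re-applied.

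The principal obstacle will be the joint parameter tuning across rounds: at each level one must select $(q,r,\ell,t)$ so that the soundness decays geometrically, the multiplicative blowup $\ell\cdot q^t$ is polynomial, and the instance remains pseudo-projective for the subsequent round. The exponent $k^3$ arises from paying a factor of $k$ in the initial block partition of $\varphi$, a factor of $k$ in the number of iterations, and a factor of $k$ in the per-round alphabet-vs-super-node trade-off. These are exactly the quantitative book-keeping trade-offs handled in~\cite{KLM19} inside the distributed PCP framework, and the key observation to be verified is that the same accounting survives once the distributed PCP machinery is replaced by threshold-graph composition via Theorem~\ref{thm:gap}. Everything outside this bookkeeping (correctness of the initial reduction, well-definedness of each composition, preservation of completeness) is routine.
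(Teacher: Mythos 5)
Your second step (one composition with a Reed--Solomon code via Theorem~\ref{thm:gap}) is the right move, but your first step has a genuine gap: the reduction from 3-SAT you describe does not have the pseudo-projection property, and the ``refinement'' you invoke to restore it cannot work in the form you need. With $V_i$ the set of all assignments to a variable block $X_i$ and $W_j$ the satisfying assignments to the variables of a clause batch $B_j$, a vertex $v\in V_i$ whose block meets $\mathrm{vars}(B_j)$ without containing it has many consistent neighbours in $W_j$ (all extensions of its restriction) but not all of $W_j$; this is neither a function nor a complete bipartite graph. If you refine $W_j$ into pieces on which each $V_i$ acts by a function, each piece can involve only variables of a single block, and then no right super-node can check a clause whose literals lie in three different blocks: completeness survives, soundness is lost. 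The paper resolves this by flipping the roles: it partitions the \emph{clauses} into $k$ groups, lets $V_i$ be the partial assignments that satisfy group $C_i$ (so clause checking lives on the left), and lets the right super-nodes enforce only consistency on shared variables. The bounded-occurrence form of \eth\ (each variable in at most three clauses, via sparsification and Tovey's reduction) guarantees each variable is shared by at most three groups, so the right super-nodes can be indexed by subsets $J\subseteq[k]$ with $|J|\le 3$, giving $t=O(k^3)$ super-nodes and genuine pseudo-projection (Lemma~\ref{lem:eth-reduction}).

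This also changes the bookkeeping and makes your iteration unnecessary (and unsupported). Because $t=O(k^3)$ depends on $k$ alone, a single application of Theorem~\ref{thm:gap} with $q\approx 2^{m/k^4}$ suffices: the blow-up $\ell\cdot q^{t}\le q^{k^3+1}=2^{O(m/k)}$ is comparable to $|\Gamma_0|=2^{O(m/k)}$, while the soundness $1-\dist(C)\le k^3/q$ is already $n^{-O(1/k^3)}$ in terms of the final size $n=2^{\Theta(m/k)}$. So the exponent $k^3$ comes from the number of right super-nodes forced by the occurrence bound, not from a factor-of-$k$ accounting over $\Theta(k)$ rounds. Your proposed iteration would in any case require a new argument: the composed instance $\Gamma(V\dot\cup A,E)$ does not inherit the pseudo-projection property (a left vertex is typically adjacent to many, but not all, vertices of each $A_i$), and the assertion that a ``canonical choice of the next code'' re-establishes it is precisely what would have to be proved. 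Note finally that if your first stage did have pseudo-projection with $t=\Theta(k)$ right super-nodes, one application would already give soundness $n^{-\Omega(1/k)}$, stronger than the stated bound---further evidence that the difficulty sits in the exact reduction itself, not in any multi-round parameter tuning.
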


\begin{theorem}\label{thm:wone}
	Assuming \Wone, for every  computable function $F:\N\to \N$, there is no $F(k)\poly(n)$ time algorithm that given a $k$-\maxcover instance $\Gamma(G=(V\dot\cup W,E))$, where $V$ is divided into $k$ parts, can decide between the following two cases:
	\begin{description}
		\item[Completeness:] $\emph{\maxcover}(\Gamma)=1$.   
		\item[Soundness:] $\emph{\maxcover}(\Gamma)\leq n^{-\frac{1}{10\sqrt{k}}} $.  
	\end{description}
\end{theorem}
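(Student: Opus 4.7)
The plan is to prove Theorem~\ref{thm:wone} by combining two ingredients, following the recipe outlined just before the theorem statement: (a) a polynomial-time reduction from Colored $k_0$-Clique (hard under \Wone{}) to exact $k$-\maxcover with the pseudo-projection property, for $k=\binom{k_0}{2}$; and (b) the gap-creation reduction of Theorem~\ref{thm:gap} instantiated with a Reed-Solomon code (Theorem~\ref{thm:rs}) of carefully chosen parameters.

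For Step (a), starting from a Colored $k_0$-Clique instance on a graph $H$ with $n$ vertices partitioned into color classes $U_1,\ldots,U_{k_0}$, I would build a $\binom{k_0}{2}$-\maxcover instance $\Gamma_0$ whose left super-nodes $V_{(i,j)}$ (one per pair $i<j$) are the edges of $H$ between $U_i$ and $U_j$, and whose right super-nodes $W_i$ (one per color class) are the vertices of $U_i$. Connect each edge $e=\{u,v\}\in V_{(i,j)}$, with $u\in U_i$ and $v\in U_j$, to $u\in W_i$ and to $v\in W_j$, and make $V_{(i,j)}$ fully bipartite to $W_{i'}$ for every $i'\notin\{i,j\}$. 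Then each edge in $V_{(i,j)}$ has exactly one neighbor in $W_i$ and one in $W_j$, so the pseudo-projection property holds; and $\maxcover(\Gamma_0)=1$ iff $H$ contains a colored $k_0$-clique, via the natural correspondence between cliques and consistent edge choices across color classes.

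For Step (b), invoke Theorem~\ref{thm:gap} on $\Gamma_0$ using a Reed-Solomon code $C:[q]^r\to[q]^q$ of distance $1-r/q$, with $r$ taken minimal subject to $q^r\geq\max_j|W_j|$ (so $r=O(\log n/\log q)$) and $q$ a small polynomial in $n$, say $q=n^{c/k_0}$ for a suitable constant $c$. The output instance $\Gamma$ inherits its left side $V$ from $\Gamma_0$, so $|V|=\poly(n)$, and has soundness at most $r/q$ in the no case. Since $k=\binom{k_0}{2}$ gives $\sqrt{k}=\Theta(k_0)$, the target exponent $1/(10\sqrt{k})$ is $\Theta(1/k_0)$, and a direct computation shows that the choice above makes $r/q\leq |V|^{-1/(10\sqrt{k})}$. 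Composing the two reductions yields an FPT reduction from $k_0$-Clique to the desired gap $k$-\maxcover problem, which under \Wone{} rules out $F(k)\poly(n)$-time algorithms.

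The main obstacle is Step (a): one must design a reduction from $k_0$-Clique to exact $k$-\maxcover that simultaneously (i) has the pseudo-projection property, (ii) keeps the number of right super-nodes $t$ at most $O(\sqrt{k})$, and (iii) keeps each right super-node of size polynomial in the input. The textbook reduction with $V_i=V(H)$ and $W_{(i,j)}=E(H)$ violates (i) because a vertex is incident to many edges, so the left side must instead encode edges and the right side must encode vertices, as in the scheme above. Keeping $t=k_0=\Theta(\sqrt{k})$ is what makes the parameter chase in Step (b) routine; any reduction that blows up either $t$ or the right super-node sizes would force a quantitatively weaker soundness than the claimed $n^{-1/(10\sqrt{k})}$.
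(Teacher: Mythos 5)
Your proposal matches the paper's own proof essentially step for step: the paper's Lemma~\ref{lem:clique-reduction} is exactly your Step (a) (edges of the colored clique instance as left super-nodes $V_{i,j}$, color classes as right super-nodes $W_i$, full bipartite graphs elsewhere, giving pseudo-projection with $t=k_0=\Theta(\sqrt{k})$ right super-nodes), and the paper then applies Theorem~\ref{thm:gap} with a Reed--Solomon code of alphabet size roughly $m^{1/k_0}$, which is the same parameter choice as your Step (b) up to constants. The remaining details you leave as ``a direct computation'' are indeed the routine parameter chase carried out in the paper's proof, so the approach is correct and not materially different.
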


\begin{remark}[Comparison to \cite{KLM19}]\label{rem:KLM}
The proof technique in \cite{KLM19} gives us the exact same statement as in Theorem~\ref{thm:gap} and with identical parameters. The difference being that in \cite{KLM19}, the  \DPF is used, whereas we demonstrate that the result can be established using the \TGC technique as well. Additionally, it is easy to see that the 	$k$-\maxcover problem with pseudo-projection property can be reduced without any loss in parameters to a product space problem $\mathsf{PSP}(f)$ over the multi-equality Boolean function $f$ (see \cite{KLM19} for the definition of the two terms). It is also possible to reverse the direction of this reduction. Since \cite{KLM19} show time lower bounds under \eth\ and \Wone\  for $\mathsf{PSP}(f)$, it is easy to see why the two techniques yield the same result. 
\end{remark}

\section{Inapproximability of Parameterized Set Cover}\label{sec:setcover}

In this section, we prove Theorem~\ref{thm:setcoverintro} formally.

\begin{theorem}\label{thm:setcover}
For every integer $n$ and every code $C:\Sigma^r\to\Sigma^\ell$ of relative distance $\delta$, such that $|\Sigma|^r\ge n$, there is an algorithm running in $\poly(n)$ time   that takes as input an instance  $(\mathcal{U}, \mathcal{S}=\mathcal{S}_1\cup\cdots \cup \mathcal{S}_k)$ of $(k,k+1)$-\emph{\setcover} problem (where $|\mathcal{S}|=n$) and outputs an instance  $(\mathcal{U}', \mathcal{S}'=\mathcal{S}'_1\cup\cdots \cup \mathcal{S}'_k)$ of $(k,h)$-\emph{\setcover} problem such that the following holds. 
\begin{description}
\item[Size:] $|\mathcal S|=|\mathcal S'|$ and $|\mathcal U'|= \ell \cdot |\mathcal U|^{|\Sigma|^{k}}$.
\item[Completeness:] If there exists $(S_1,\dots S_k)\in\mathcal{S}_1\times \cdots \times \mathcal{S}_k$ such that $\bigcup_{i=1}^k S_i = \mathcal{U}$ then there exists $(S'_1,\dots S'_k)\in\mathcal{S}'_1\times \cdots \times \mathcal{S}'_k$ such that $\bigcup_{i=1}^k S'_i = \mathcal{U}'$.
\item[Soundness:]  If there is no cover for $\mathcal{U}$ of size $k$ in $\mathcal{S}$ then there is no  cover for $\mathcal{U}'$ of size $h$ in $\mathcal{S}'$ where $ h:=\Cov(C)\ge \sqrt{\frac{2}{1-\delta}}$.    
\end{description}
\end{theorem}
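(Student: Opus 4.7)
The plan is to apply the threshold graph $G_{C,k}$ of Definition~\ref{def:threshold} (with $t=k$) together with a ``function indexing'' composition in the spirit of Lin~\cite{L19}. Since $|\Sigma|^r\ge n\ge|\mathcal{S}_j|$, we may inject each $S\in\mathcal{S}_j$ into the vertex set $B_j$, thereby associating $S$ with a unique codeword $C(S)\in\Sigma^\ell$; the output collection $\mathcal{S}'_j$ will be a bijective copy of $\mathcal{S}_j$, so $|\mathcal{S}'|=|\mathcal{S}|=n$. The new universe is $\mathcal{U}':=[\ell]\times\{f:\Sigma^k\to\mathcal{U}\}$, whose elements we write as $(i,f)$; this has size $\ell\cdot|\mathcal{U}|^{|\Sigma|^k}$, and one should think of $f$ as a labeling of $A_i\cong\Sigma^k$ by elements of $\mathcal{U}$. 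The membership rule, for each $S\in\mathcal{S}_j$ with image $S'\in\mathcal{S}'_j$, is
\[
(i,f)\in S'\quad\Longleftrightarrow\quad \exists\,\sigma\in\Sigma^k\ \text{such that}\ \sigma_j=C(S)_i\ \text{and}\ f(\sigma)\in S,
\]
and building $(\mathcal{U}',\mathcal{S}')$ from this rule takes time polynomial in $|\mathcal{U}'|$ and $n$.

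For completeness, suppose $(S_1,\dots,S_k)\in\mathcal{S}_1\times\cdots\times\mathcal{S}_k$ covers $\mathcal{U}$. For any $(i,f)\in\mathcal{U}'$, set $\sigma^*:=(C(S_1)_i,\dots,C(S_k)_i)\in\Sigma^k$; this is precisely the unique common neighbor of $S_1,\dots,S_k$ in $A_i$ supplied by the completeness clause of Lemma~\ref{lem:threshold}. Since $f(\sigma^*)\in\mathcal{U}$ is covered, pick $j$ with $f(\sigma^*)\in S_j$; then $\sigma^*_j=C(S_j)_i$ by construction, so the rule yields $(i,f)\in S'_j$ and $(S'_1,\dots,S'_k)$ covers $\mathcal{U}'$.

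For soundness, suppose for contradiction that $T_1,\dots,T_s\in\mathcal{S}'$ cover $\mathcal{U}'$ with $s<h=\Cov(C)$, where each $T_l$ corresponds to an original set in $\mathcal{S}_{j_l}$. By Definition~\ref{def:cov} there exists some $i^*\in[\ell]$ at which the $s$ symbols $c_l:=C(T_l)_{i^*}$ are pairwise distinct. For $\sigma\in\Sigma^k$ let $L(\sigma):=\{l\in[s]:\sigma_{j_l}=c_l\}$. The pivotal observation is $|L(\sigma)|\le k$: if $l,l'\in L(\sigma)$ with $j_l=j_{l'}$, then $c_l=\sigma_{j_l}=c_{l'}$, contradicting distinctness, so $L(\sigma)$ contains at most one index per color class. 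If no $\sigma$ satisfied $\bigcup_{l\in L(\sigma)}T_l=\mathcal{U}$, we could define $f:\Sigma^k\to\mathcal{U}$ by choosing $f(\sigma)\in\mathcal{U}\setminus\bigcup_{l\in L(\sigma)}T_l$ for every $\sigma$, making $(i^*,f)$ uncovered, a contradiction. Hence some $\sigma$ yields a sub-collection $\{T_l:l\in L(\sigma)\}$ of at most $k$ sets from $\mathcal{S}$ that covers $\mathcal{U}$; padding with arbitrary sets from the absent color classes produces a $k$-tuple cover of $(\mathcal{U},\mathcal{S})$.

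The delicate step is the soundness argument: one must extract, from a global cover of $\mathcal{U}'$ of size strictly below $\Cov(C)$, a $k$-sized cover of $\mathcal{U}$ distributed across color classes. The collision number is precisely the right parameter, because it isolates a coordinate $i^*$ on which the chosen codewords separate; the ``one symbol, one color'' argument then caps $|L(\sigma)|\le k$, and the adversarial choice of $f$ extracts the cover. Substituting the lower bound from Proposition~\ref{prop:cov} upgrades the soundness guarantee to the advertised $\sqrt{2/(1-\delta)}$.
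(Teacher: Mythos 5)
Your proposal is correct, and it is in essence the paper's route: the paper proves Theorem~\ref{thm:setcover} by quoting Lin's composition as a black box (Lemma~\ref{lem:reduction}, imported from \cite{L19}) and then invoking Proposition~\ref{prop:cov} to lower bound $h=\Cov(C)$ by $\sqrt{2/(1-\delta)}$; what you have written is precisely that composition, re-derived from scratch. Your universe $[\ell]\times\{f:\Sigma^k\to\mathcal{U}\}$, membership rule, and completeness argument coincide with the construction behind Lemma~\ref{lem:reduction}, and your soundness argument is the contrapositive of the paper's: instead of first showing that a small cover of $\mathcal{U}'$ forces, for every $i\in[\ell]$, a vertex of $A_i$ with $k+1$ neighbors among the chosen codewords and then applying the collision property of Lemma~\ref{lem:threshold}, you inline that property and argue directly from Definition~\ref{def:cov}. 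The one imprecision is your claim that the $s$ symbols $c_l=C(T_l)_{i^*}$ are \emph{pairwise} distinct: the injections of $\mathcal{S}_1,\dots,\mathcal{S}_k$ into $B_1,\dots,B_k$ are independent, so two chosen sets from \emph{different} colour classes may be assigned the same codeword, in which case $c_l=c_{l'}$. This is harmless: pass to the \emph{set} $X$ of distinct codewords among the $C(T_l)$ (so $|X|\le s<\Cov(C)$), obtain a coordinate $i^*$ on which $X$ has no collision, and note that your bound $|L(\sigma)|\le k$ only needs distinctness of symbols \emph{within} a colour class, which holds because the map $\mathcal{S}_j\to B_j$ is injective. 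With that adjustment the argument is complete, and it matches the soundness of the $(k,h)$-\setcover definition (no cover of size strictly less than $\Cov(C)$), which is what the paper's statement intends. A last cosmetic point: the honest running time is $\poly(n,|\mathcal{U}'|)$, exactly as in Lemma~\ref{lem:reduction}, which is what you wrote and what the theorem's $\poly(n)$ claim implicitly assumes in the intended regime $|\mathcal{U}|=O_k(\log n)$ and small $|\Sigma|$.
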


The proof of the above theorem follows immediately from combining the below lemma proved in \cite{L19} with our lower bound on covering number given in Proposition~\ref{prop:cov}.

\begin{lemma}[\cite{L19}]\label{lem:reduction}
There is an algorithm which, given an integer $k$, an instance  $(\mathcal{U}, \mathcal{S}=\mathcal{S}_1\cup\cdots \cup \mathcal{S}_k)$ of $(k,k+1)$-\emph{\setcover} problem (where $|\mathcal{S}|=n$), and a threshold graph $G_{C,k}$  as described in Definition~\ref{def:threshold}, outputs a $(k,h)$-\emph{\setcover} instance $(\mathcal{U}', \mathcal{S}'=\mathcal{S}'_1\cup\cdots \cup \mathcal{S}'_k)$   with $|\mathcal S'|=|\mathcal S|$ and $|\mathcal U'|=\ell \cdot |\mathcal U|^{|\Sigma|^k}$ in $|\mathcal U|^{|\Sigma|^k}\cdot n^{O(1)}$ time such that
\begin{itemize}\item If there exists $(S_1,\dots S_k)\in\mathcal{S}_1\times \cdots \times \mathcal{S}_k$ such that $\bigcup_{i=1}^k S_i = \mathcal{U}$ then there exists $(S'_1,\dots S'_k)\in\mathcal{S}'_1\times \cdots \times \mathcal{S}'_k$ such that $\bigcup_{i=1}^k S'_i = \mathcal{U}'$.
\item  If there is no cover for $\mathcal{U}$ of size $k$ in $\mathcal{S}$ then there is no  cover for $\mathcal{U}'$ of size $\Cov(C)$  in $\mathcal{S}'$ (follows from Lemma~\ref{lem:threshold}).
\end{itemize}
\end{lemma}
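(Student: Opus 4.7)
The plan is to compose the input set system with the threshold graph $G_{C,k}=(A\dot\cup B,E)$ explicitly, and then read off the three guarantees (size, completeness, soundness) from the matching three properties of Lemma~\ref{lem:threshold}. Since $|B_j|=|\Sigma|^r\ge n\ge|\mathcal{S}_j|$, I fix an injection $S\mapsto b_S\in B_j$ for each $S\in\mathcal{S}_j$ and $j\in[k]$, and set $\mathcal{S}'_j:=\{S'\mid S\in\mathcal{S}_j\}$, so $|\mathcal{S}'|=|\mathcal{S}|$ holds trivially. I take $\mathcal{U}':=\{(i,f)\mid i\in[\ell],\,f\colon A_i\to\mathcal{U}\}$; since $|A_i|=|\Sigma|^k$ this gives $|\mathcal{U}'|=\ell\cdot|\mathcal{U}|^{|\Sigma|^k}$. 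The containment rule is
\[(i,f)\in S'\iff \exists\,a\in A_i\cap\neighb(b_S)\text{ with }f(a)\in S.\]

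For completeness, I start from a cover $(S_1,\dots,S_k)\in\mathcal{S}_1\times\cdots\times\mathcal{S}_k$ of $\mathcal{U}$ and set $b_j:=b_{S_j}$. Fix any $(i,f)\in\mathcal{U}'$. The Completeness clause of Lemma~\ref{lem:threshold} (with $t=k$) produces a vertex $a^\star\in A_i$ that is a common neighbor of $b_1,\dots,b_k$; since $f(a^\star)\in\mathcal{U}=\bigcup_jS_j$, some $j$ satisfies $f(a^\star)\in S_j$, and this witnesses $(i,f)\in S'_j$. Thus $(S'_1,\dots,S'_k)$ covers $\mathcal{U}'$.

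For soundness, I proceed by contrapositive. Suppose $\mathcal{S}'_\star\subseteq\bigcup_j\mathcal{S}'_j$ covers $\mathcal{U}'$ with $|\mathcal{S}'_\star|<\Cov(C)$, and let $X:=\{b_S\mid S'\in\mathcal{S}'_\star\}\subseteq B$, so $|X|<\Cov(C)$. By the contrapositive of the Collision Property of Lemma~\ref{lem:threshold} (again with $t=k$), there exists $i\in[\ell]$ such that for every $a\in A_i$, the set $X_a:=\{b\in X\mid a\in\neighb(b)\}$ satisfies $|X_a|\le k$. I now diagonalize: for each $a\in A_i$, the family $\mathcal{T}_a:=\{S_b\mid b\in X_a\}\subseteq\mathcal{S}$ has cardinality at most $k$, so by the hypothesis that $\mathcal{S}$ admits no $k$-cover of $\mathcal{U}$, there is some element $u_a\in\mathcal{U}\setminus\bigcup_{S\in\mathcal{T}_a}S$. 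Define $f(a):=u_a$ for each $a\in A_i$. For every $S'\in\mathcal{S}'_\star$ and every $a\in A_i\cap\neighb(b_S)$ we have $b_S\in X_a$ and therefore $f(a)=u_a\notin S$; hence no $S'\in\mathcal{S}'_\star$ contains $(i,f)$, contradicting that $\mathcal{S}'_\star$ is a cover.

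The subtle step, and the one I expect to require the most care, is matching the $(k+1)$-fold collision threshold built into $G_{C,k}$ to the $k$-cover hypothesis of the input $(k,k+1)$-\setcover instance: the contrapositive of the Collision Property delivers $|X_a|\le k$, which is exactly tight enough for the no-$k$-cover hypothesis to supply a dodger $u_a$. The runtime bound $|\mathcal{U}|^{|\Sigma|^k}\cdot n^{O(1)}$ then follows by iterating over the $n$ sets $S\in\mathcal{S}$ and the $\ell\cdot|\mathcal{U}|^{|\Sigma|^k}$ universe elements $(i,f)\in\mathcal{U}'$, and checking the local containment condition by scanning $A_i\cap\neighb(b_S)$, where the edges of $G_{C,k}$ are computed on demand directly from $C$.
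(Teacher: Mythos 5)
Your proposal is correct, and it is essentially the standard composition: the paper itself does not reprove this lemma but imports it from Lin~\cite{L19}, and your construction (universe $\{(i,f): i\in[\ell],\ f:A_i\to\mathcal{U}\}$, membership witnessed by a neighbor $a\in A_i\cap\neighb(b_S)$ with $f(a)\in S$, completeness from the common-neighbor property, and soundness by diagonalizing against the at-most-$k$ sets attached to each $a\in A_i$ via the contrapositive of the Collision Property) is exactly Lin's composition with the threshold graph $G_{C,k}$, matching the stated size, time, and $\Cov(C)$ soundness bounds.
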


Next, we show that for a specific choice of code $C$, we can achieve the following parameters for $k$-\emph{\setcover} problem.

\begin{cor}\label{cor:setcoverbest}
For every integer $n$, there is an algorithm running in $\poly(n)$ time   that takes as input an instance  $(\mathcal{U}, \mathcal{S}=\mathcal{S}_1\cup\cdots \cup \mathcal{S}_k)$ of $(k,k+1)$-\emph{\setcover} problem (where $|\mathcal{S}|=n$) and outputs an instance  $(\mathcal{U}', \mathcal{S}'=\mathcal{S}'_1\cup\cdots \cup \mathcal{S}'_k)$ of $\left(k,\sqrt[k]{\frac{\log |\mathcal U'|}{\log \log |\mathcal U'|}}\right)$-\emph{\setcover} problem such that the following holds. 
\begin{description}
\item[Size:] $|\mathcal S|=|\mathcal S'|$ and $|\mathcal U'|= n$.
\item[Completeness:] If there exists $(S_1,\dots S_k)\in\mathcal{S}_1\times \cdots \times \mathcal{S}_k$ such that $\bigcup_{i=1}^k S_i = \mathcal{U}$ then there exists $(S'_1,\dots S'_k)\in\mathcal{S}'_1\times \cdots \times \mathcal{S}'_k$ such that $\bigcup_{i=1}^k S'_i = \mathcal{U}'$.
\item[Soundness:]  If there is no cover for $\mathcal{U}$ of size $k$ in $\mathcal{S}$ then there is no  cover for $\mathcal{U}'$ of size $\sqrt[k]{\frac{\log |\mathcal U'|}{\log \log |\mathcal U'|}}$ in $\mathcal{S}'$.    
\end{description}
\end{cor}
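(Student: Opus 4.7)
The plan is to instantiate Theorem~\ref{thm:setcover} with a code $C$ that attains the optimal collision number $q+1$ allowed by Proposition~\ref{prop:cov}. By Proposition~\ref{prop:covtight}, such codes are precisely perfect hash families, and the classical constructions cited after Definition~\ref{def:HM} yield explicit $[N,\ell]_q$-perfect hash families with block length $\ell=2^{O(q)}\log N$.

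Concretely, I would choose $q:=\lceil(\log n/\log\log n)^{1/k}\rceil$, so that $q^k=O(\log n/\log\log n)$, and invoke such a family with $N=n$. Via Proposition~\ref{prop:covtight} this yields a code $C\colon [q]^r\to [q]^\ell$ with $r=\lceil\log_q n\rceil$ (hence $|\Sigma|^r\ge n$, as required by Theorem~\ref{thm:setcover}), block length $\ell=2^{O(q)}\log n=n^{o(1)}$ whenever $k\ge 2$, and collision number exactly $q+1\ge \sqrt[k]{\log n/\log\log n}$. I would also assume, as is standard in this line of work and as done in Lin~\cite{L19}, that the input $(k,k+1)$-\setcover instance already satisfies $|\mathcal U|=O_k(\log n)$, which is the regime in which the problem is known to be hard.

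Feeding this code into Theorem~\ref{thm:setcover} produces a $(k,h)$-\setcover instance with $|\mathcal S'|=n$, $h=\Cov(C)=q+1$, and
\[
|\mathcal U'|=\ell\cdot|\mathcal U|^{q^k}=n^{o(1)}\cdot(O_k(\log n))^{O(\log n/\log\log n)}=n^{1+o(1)}.
\]
Because $|\mathcal U'|=n^{1+o(1)}$, we have $\log|\mathcal U'|/\log\log|\mathcal U'|=(1+o(1))\log n/\log\log n$, so $h=q+1\ge \sqrt[k]{\log|\mathcal U'|/\log\log|\mathcal U'|}$ as claimed, after a padding step (adding dummy universe elements each covered by a fixed set in every $\mathcal S'_i$) that brings $|\mathcal U'|$ exactly up to $n$. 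The main obstacle is this parameter-matching step: shaving the $n^{o(1)}$ slack and tuning the constants so that the final universe size is exactly $n$ rather than $n^{1+o(1)}$ requires careful accounting, and the existence of explicit perfect hash families with block length $2^{O(q)}\log N$ must be invoked by name from the literature cited after Definition~\ref{def:HM} rather than rederived here.
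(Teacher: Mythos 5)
Your proposal follows exactly the paper's route: instantiate Theorem~\ref{thm:setcover} with the Alon--Yuster--Zwick $[N,2^{O(q)}\log N]_q$-perfect hash family, viewed via Proposition~\ref{prop:covtight} as a code of collision number $q+1$, with $q\approx\sqrt[k]{\log n/\log\log n}$. The paper's one-sentence proof elides the same parameter-matching details you flag (it likewise implicitly works in the regime $|\mathcal{U}|=O_k(\log n)$ so that $|\mathcal{U}'|$ comes out to $n$), so your accounting is, if anything, the more explicit of the two.
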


Notice that the parameters obtained here match the parameters obtained by Lin \cite{L19} by using universal sets. The proof of the above corollary follows by combining Theorem~\ref{thm:setcover} with the theorem below by setting $q=\sqrt[k]{\frac{\log |\mathcal U'|}{\log \log |\mathcal U'|}}$ and then applying Proposition~\ref{prop:covtight}.

\begin{theorem}[Alon et al. \cite{alon1995color}]
For every $N,q\in \mathbb{N}$ there exists a $[N,2^{O(q)}\cdot \log N]_q$-Perfect hash family that can be computed in time $\tilde{O}_q(N)$.
\end{theorem}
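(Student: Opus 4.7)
The plan is to prove the theorem in two stages: first establish the existence of a perfect hash family of the claimed cardinality via the probabilistic method, and then upgrade this to an explicit construction computable in $\tilde{O}_q(N)$ time by a two-step reduction to a small universe.

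For the parameter count, I would sample $\ell$ functions $h_1,\ldots, h_\ell:[N]\to [q]$ independently and uniformly at random. For any fixed $T\subseteq [N]$ with $|T|=q$, a single $h_i$ is injective on $T$ with probability $q!/q^q\geq e^{-q}$ (by Stirling), so all $\ell$ functions fail to be injective on $T$ with probability at most $(1-e^{-q})^{\ell}\leq \exp(-\ell e^{-q})$. A union bound over the $\binom{N}{q}\leq N^q$ candidate subsets shows that $\ell = O(e^q\cdot q\log N) = 2^{O(q)}\log N$ suffices for a positive-probability outcome, matching the stated bound.

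For the efficient explicit construction, I would compose two families. First, a ``weak'' family $\mathcal{F}_1=\{g_1,\ldots, g_{\ell_1}\}$ of functions $[N]\to [q^2]$ such that for every $q$-subset $T$ some $g_i$ is injective on $T$. Drawing each $g_i$ from an explicit $q$-wise independent family (built from low-degree polynomial evaluation over $\mathbb{F}_{q^2}$, with seed length $O(q\log q)$ and $\poly(q)$-time evaluation), the probability that $g_i$ is injective on any fixed $T$ is $\geq 1-\binom{q}{2}/q^2\geq 1/2$; picking $\ell_1=O(q\log N)$ such functions at random covers every $T$ with positive probability, and this selection can be derandomized by conditional expectations over the $q$-wise independent seed. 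Second, on the now-small universe $[q^2]$, build an explicit perfect hash family $\mathcal{F}_2=\{f_1,\ldots, f_{\ell_2}\}$ from $[q^2]\to[q]$ with $\ell_2=2^{O(q)}$, either via splitter constructions or by brute-force enumeration in time $2^{O(q)}\cdot\poly(q)$. Composing gives the family $\{f_j\circ g_i:(i,j)\in[\ell_1]\times[\ell_2]\}$, which is a $[N,\ell_1\ell_2]_q$-perfect hash family with $\ell_1\ell_2 = 2^{O(q)}\log N$. The first stage writes its $\ell_1\times N$ table in $O(\ell_1 N\cdot\poly(q))=\tilde{O}_q(N)$ time, and the second stage touches only the small universe $[q^2]$, adding an $O_q(1)$ cost per entry, so the total construction time remains $\tilde{O}_q(N)$.

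The main obstacle is obtaining the $2^{O(q)}$ factor (rather than the naive $2^{O(q\log q)}$) in the small-universe family $\mathcal{F}_2$: the number of $q$-subsets of $[q^2]$ has logarithm $\Theta(q\log q)$, so a direct union bound applied to uniform random $f_j$ gives only $|\mathcal{F}_2|=2^{O(q\log q)}$. Overcoming this requires either a splitter-style combinatorial construction or a more refined argument exploiting the structured smallness of the universe, and this is the non-trivial ingredient hiding behind the clean statement. Once $\mathcal{F}_2$ is in hand, the two-stage composition is straightforward and the near-linear runtime follows essentially for free.
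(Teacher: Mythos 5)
The paper itself does not prove this statement; it imports it verbatim from Alon--Yuster--Zwick, so the relevant comparison is with their construction, whose shape your plan does follow (probabilistic existence, plus a two-level composition $[N]\to[q^2]\to[q]$). The genuine gap is at the first level. You propose to pick $\ell_1=O(q\log N)$ functions from a $q$-wise independent family (pairwise independence already suffices for the collision bound, incidentally) and to derandomize the selection ``by conditional expectations over the seed.'' But the natural potential function is the number of $q$-subsets of $[N]$ not yet hashed injectively by any chosen function, and there are $\binom{N}{q}$ of them; evaluating or updating conditional expectations of this quantity naively costs $N^{\Theta(q)}$ time, which is not $\tilde{O}_q(N)$ (the subscript hides factors depending on $q$ alone, not polynomial factors in $N$ whose degree grows with $q$), and you give no mechanism for computing these expectations implicitly. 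This is precisely where the known constructions diverge from your sketch: Alon--Yuster--Zwick invoke the explicit Fredman--Koml\'os--Szemer\'edi/Schmidt--Siegel first-level families, whose members are $q$-perfect into $[q^2]$ by construction and have descriptions of only $O(q+\log\log N)$ bits, while Naor--Schulman--Srinivasan build splitters from small-bias spaces; neither runs a greedy over $q$-subsets of $[N]$. As written, your first-level derandomization does not meet the claimed running time, and fixing it requires a different idea, not more care.

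By contrast, the obstacle you single out at the second level is not actually there. In the union bound for uniformly random $f_j:[q^2]\to[q]$, the number $\binom{q^2}{q}$ of bad subsets enters only through its logarithm, so $\ell_2 \ge e^q\ln\binom{q^2}{q} = O(e^q\, q\log q) = 2^{O(q)}$ suffices; existence of $\mathcal{F}_2$ at size $2^{O(q)}$ is immediate, not $2^{O(q\log q)}$ as you claim. Moreover a greedy/exhaustive search over the small universe finds such a family in time depending only on $q$ (admittedly more like $2^{\Theta(q\log q)}$ than the $2^{O(q)}\poly(q)$ you state, but any $O_q(1)$ cost is absorbed by $\tilde{O}_q(N)$, and only the family size $2^{O(q)}\log N$ matters for the paper's application). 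So no splitter machinery is needed for $\mathcal{F}_2$; the real unresolved content of the cited theorem is the near-linear-time explicit first level, which your proposal leaves unaddressed.
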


\begin{remark}\label{rem:tight}
Also, notice that in Lemma~\ref{lem:reduction}, starting from a universe of size $O_k(\log n)$, in order to obtain good time lower bounds based on various assumptions such as \seth, \eth, and \Wone, we would like that the new universe size is $n^{O(1)}$. This implies that the alphabet of the code used in gap creation can be at most $O\left(\sqrt[k]{\frac{\log n}{\log\log n}}\right)$. Since the collision number of a code is by Proposition~\ref{prop:cov} at most the alphabet size (plus one), we have that it is not possible to obtain better gaps using Lin's scheme of gap creation for \setcover problem. 
\end{remark}

\section{Open Problem}\label{sec:open}

The main open question that stems from our work is if we could prove Theorem~\ref{thm:gap} when the $k$-\maxcover instance does not have the pseudo-projection property but instead is obtained through the standard \seth-hardness  reduction from $k$-\SAT\ to exact $k$-\maxcover. 

\begin{center}
\textit{Assuming \seth, can we show there is no $n^{k-\varepsilon}$ time algorithm (for some $\varepsilon>0$)\\ for gap $k$-\maxcover (of size $n$)
using the Threshold Graph Composition technique?}
\end{center}

A positive answer to the above question in particular for the case $k=2$ (assuming the supposed reduction runs in near linear  time) will have a lot of consequences in the area of fine-grained complexity. Firstly, it will open the window for Lin's \TGC technique to enter the world of inapproximability in subquadratic time, and might provide a lot of new insights, including the potential resolution of many open problems (for example, the subquadratic hardness of the gap closest pair problem \cite{KM19}). Notice that one advantage of \TGC over \DPF is that, in theory, it can handle monochromatic $k$-\maxcover (where instead of picking $(v_1,\ldots ,v_k)\in V_1\times \cdots \times V_k$, we are allowed to pick any $k$ distinct vertices in $V$). We direct the reader to Section~\ref{sec:deg} for an attempt at trying to circumvent the need for pseudo-projection property for gap creation in
Theorem~\ref{thm:gap}, when $k=2$.

\subsubsection*{Acknowledgements}
We would like to thank Lijie Chen for his detailed comments on an earlier version of the paper.

\bibliographystyle{alpha}
\bibliography{refs}

\newcommand{\etalchar}[1]{$^{#1}$}
\begin{thebibliography}{ALM{\etalchar{+}}98}

\bibitem[AAB{\etalchar{+}}92]{AABCHNS92}
Mikl{\'{o}}s Ajtai, Noga Alon, Jehoshua Bruck, Robert Cypher, Ching{-}Tien Ho,
  Moni Naor, and Endre Szemer{\'{e}}di.
\newblock Fault tolerant graphs, perfect hash functions and disjoint paths.
\newblock In {\em 33rd Annual Symposium on Foundations of Computer Science,
  Pittsburgh, Pennsylvania, USA, 24-27 October 1992}, pages 693--702, 1992.

\bibitem[AGHP92]{AGHP92}
Noga Alon, Oded Goldreich, Johan H{\aa}stad, and Ren{\'{e}} Peralta.
\newblock Simple construction of almost k-wise independent random variables.
\newblock {\em Random Struct. Algorithms}, 3(3):289--304, 1992.

\bibitem[ALM{\etalchar{+}}98]{ALMSS98}
Sanjeev Arora, Carsten Lund, Rajeev Motwani, Madhu Sudan, and Mario Szegedy.
\newblock Proof verification and the hardness of approximation problems.
\newblock {\em J. {ACM}}, 45(3):501--555, 1998.

\bibitem[AR18]{AR18}
Amir Abboud and Aviad Rubinstein.
\newblock Fast and deterministic constant factor approximation algorithms for
  {LCS} imply new circuit lower bounds.
\newblock In {\em 9th Innovations in Theoretical Computer Science Conference,
  {ITCS} 2018, January 11-14, 2018, Cambridge, MA, {USA}}, pages 35:1--35:14,
  2018.

\bibitem[ARW17]{ARW17}
Amir Abboud, Aviad Rubinstein, and Ryan Williams.
\newblock Distributed {PCP} theorems for hardness of approximation in {P}.
\newblock In {\em FOCS}, pages 25--36, 2017.

\bibitem[AS98]{AS98}
Sanjeev Arora and Shmuel Safra.
\newblock Probabilistic checking of proofs: {A} new characterization of {NP}.
\newblock {\em J. {ACM}}, 45(1):70--122, 1998.

\bibitem[AYZ95]{alon1995color}
Noga Alon, Raphael Yuster, and Uri Zwick.
\newblock Color-coding.
\newblock {\em Journal of the ACM (JACM)}, 42(4):844--856, 1995.

\bibitem[BBE{\etalchar{+}}19]{BBEGKLMM19}
Arnab Bhattacharyya, {\'{E}}douard Bonnet, L{\'{a}}szl{\'{o}} Egri, Suprovat
  Ghoshal, {Karthik {C. S.}}, Bingkai Lin, Pasin Manurangsi, and D{\'{a}}niel
  Marx.
\newblock Parameterized intractability of even set and shortest vector problem.
\newblock {\em Electronic Colloquium on Computational Complexity {(ECCC)}},
  26:115, 2019.

\bibitem[BGK{\etalchar{+}}96]{BGKRSW96}
L{\'{a}}szl{\'{o}} Babai, Anna G{\'{a}}l, J{\'{a}}nos Koll{\'{a}}r, Lajos
  R{\'{o}}nyai, Tibor Szab{\'{o}}, and Avi Wigderson.
\newblock Extremal bipartite graphs and superpolynomial lower bounds for
  monotone span programs.
\newblock In {\em Proceedings of the Twenty-Eighth Annual {ACM} Symposium on
  the Theory of Computing, Philadelphia, Pennsylvania, USA, May 22-24, 1996},
  pages 603--611, 1996.

\bibitem[CCK{\etalchar{+}}20]{CCKLMNT20}
Parinya Chalermsook, Marek Cygan, Guy Kortsarz, Bundit Laekhanukit, Pasin
  Manurangsi, Danupon Nanongkai, and Luca Trevisan.
\newblock From gap-eth to fpt-inapproximability: Clique, dominating set, and
  more.
\newblock {\em {SIAM} J. Comput.}, 49(4):772--810, 2020.

\bibitem[CGL{\etalchar{+}}19]{CGLRR19}
Lijie Chen, Shafi Goldwasser, Kaifeng Lyu, Guy~N. Rothblum, and Aviad
  Rubinstein.
\newblock Fine-grained complexity meets {IP} = {PSPACE}.
\newblock In {\em Proceedings of the Thirtieth Annual {ACM-SIAM} Symposium on
  Discrete Algorithms, {SODA} 2019, San Diego, California, USA, January 6-9,
  2019}, pages 1--20, 2019.

\bibitem[Che18]{C18}
Lijie Chen.
\newblock On the hardness of approximate and exact (bichromatic) maximum inner
  product.
\newblock In {\em 33rd Computational Complexity Conference, {CCC} 2018, June
  22-24, 2018, San Diego, CA, {USA}}, pages 14:1--14:45, 2018.

\bibitem[Chv79]{C79}
Vasek Chv{\'{a}}tal.
\newblock A greedy heuristic for the set-covering problem.
\newblock {\em Math. Oper. Res.}, 4(3):233--235, 1979.

\bibitem[CL19]{CL19}
Yijia Chen and Bingkai Lin.
\newblock The constant inapproximability of the parameterized dominating set
  problem.
\newblock {\em {SIAM} J. Comput.}, 48(2):513--533, 2019.

\bibitem[DF95]{DF95}
Rodney~G. Downey and Michael~R. Fellows.
\newblock Fixed-parameter tractability and completeness {II:} on completeness
  for {W[1]}.
\newblock {\em Theor. Comput. Sci.}, 141(1{\&}2):109--131, 1995.

\bibitem[Din07]{D07}
Irit Dinur.
\newblock The {PCP} theorem by gap amplification.
\newblock {\em J. {ACM}}, 54(3):12, 2007.

\bibitem[Din16]{D16}
Irit Dinur.
\newblock Mildly exponential reduction from gap 3sat to polynomial-gap
  label-cover.
\newblock {\em Electronic Colloquium on Computational Complexity {(ECCC)}},
  23:128, 2016.

\bibitem[DS14]{DS14}
Irit Dinur and David Steurer.
\newblock Analytical approach to parallel repetition.
\newblock In {\em STOC}, pages 624--633, 2014.

\bibitem[Fei98]{F98}
Uriel Feige.
\newblock A threshold of ln \emph{n} for approximating set cover.
\newblock {\em J. {ACM}}, 45(4):634--652, 1998.

\bibitem[FK84]{FK84}
Michael~L. Fredman and J{\'{a}}nos Koml{\'{o}}s.
\newblock On the size of separating systems and families of perfect hash
  functions.
\newblock {\em {SIAM} {J}ournal on {A}lgebraic and {D}iscrete {M}ethods},
  5(1):61--68, 1984.

\bibitem[FKLM20]{FKLM20}
Andreas~Emil Feldmann, {Karthik {C. S.}}, Euiwoong Lee, and Pasin Manurangsi.
\newblock A survey on approximation in parameterized complexity: Hardness and
  algorithms.
\newblock {\em Algorithms}, 13(6):146, 2020.

\bibitem[FKS84]{FKS84}
Michael~L. Fredman, J{\'{a}}nos Koml{\'{o}}s, and Endre Szemer{\'{e}}di.
\newblock Storing a sparse table with 0(1) worst case access time.
\newblock {\em J. {ACM}}, 31(3):538--544, 1984.

\bibitem[IP01]{IP01}
Russell Impagliazzo and Ramamohan Paturi.
\newblock On the complexity of k-{SAT}.
\newblock {\em J. Comput. Syst. Sci.}, 62(2):367--375, 2001.

\bibitem[IPZ01]{IPZ01}
Russell Impagliazzo, Ramamohan Paturi, and Francis Zane.
\newblock Which problems have strongly exponential complexity?
\newblock {\em J. Comput. Syst. Sci.}, 63(4):512--530, 2001.

\bibitem[Kar72]{K72}
Richard~M. Karp.
\newblock Reducibility among combinatorial problems.
\newblock In {\em Proceedings of a symposium on the Complexity of Computer
  Computations}, pages 85--103, 1972.

\bibitem[KLM19]{KLM19}
{Karthik {C. S.}}, Bundit Laekhanukit, and Pasin Manurangsi.
\newblock On the parameterized complexity of approximating dominating set.
\newblock {\em J. {ACM}}, 66(5):33:1--33:38, 2019.

\bibitem[KM19]{KM19}
{Karthik {C. S.}} and Pasin Manurangsi.
\newblock On closest pair in euclidean metric: Monochromatic is as hard as
  bichromatic.
\newblock In {\em 10th Innovations in Theoretical Computer Science Conference,
  {ITCS} 2019, January 10-12, 2019, San Diego, California, {USA}}, pages
  17:1--17:16, 2019.

\bibitem[Lin18]{L18}
Bingkai Lin.
\newblock The parameterized complexity of the \emph{k}-biclique problem.
\newblock {\em J. {ACM}}, 65(5):34:1--34:23, 2018.

\bibitem[Lin19]{L19}
Bingkai Lin.
\newblock A simple gap-producing reduction for the parameterized set cover
  problem.
\newblock In {\em 46th International Colloquium on Automata, Languages, and
  Programming, {ICALP} 2019, July 9-12, 2019, Patras, Greece}, pages
  81:1--81:15, 2019.

\bibitem[LRSZ20]{LRSZ20}
Daniel Lokshtanov, M.~S. Ramanujan, Saket Saurabh, and Meirav Zehavi.
\newblock Parameterized complexity and approximability of directed odd cycle
  transversal.
\newblock In {\em Proceedings of the 2020 {ACM-SIAM} Symposium on Discrete
  Algorithms, {SODA} 2020, Salt Lake City, UT, USA, January 5-8, 2020}, pages
  2181--2200, 2020.

\bibitem[MR16]{MR16}
Pasin Manurangsi and Prasad Raghavendra.
\newblock A birthday repetition theorem and complexity of approximating dense
  csps.
\newblock {\em CoRR}, abs/1607.02986, 2016.

\bibitem[Nil94]{N94}
Alon Nilli.
\newblock Perfect hashing and probability.
\newblock {\em Comb. Probab. Comput.}, 3:407--409, 1994.

\bibitem[RS60]{RS60}
Irving~S. Reed and Gustave Solomon.
\newblock Polynomial codes over certain finite fields.
\newblock {\em Journal of the Society for Industrial and Applied Mathematics
  (SIAM)}, 8(2):300 -- 304, 1960.

\bibitem[Rub18]{R18}
Aviad Rubinstein.
\newblock Hardness of approximate nearest neighbor search.
\newblock In {\em Proceedings of the 50th Annual {ACM} {SIGACT} Symposium on
  Theory of Computing, {STOC} 2018, Los Angeles, CA, USA, June 25-29, 2018},
  pages 1260--1268, 2018.

\bibitem[Sla96]{S96}
Petr Slav{\'{\i}}k.
\newblock A tight analysis of the greedy algorithm for set cover.
\newblock In {\em STOC}, pages 435--441, 1996.

\bibitem[Sri95]{S95}
Aravind Srinivasan.
\newblock Improved approximations of packing and covering problems.
\newblock In {\em STOC}, pages 268--276, 1995.

\bibitem[Tov84]{Tovey84}
Craig~A. Tovey.
\newblock A simplified {NP}-complete satisfiability problem.
\newblock {\em Discrete Applied Mathematics}, 8(1):85--89, 1984.

\bibitem[Wlo20]{W20}
Michal Wlodarczyk.
\newblock Parameterized inapproximability for steiner orientation by gap
  amplification.
\newblock In {\em 47th International Colloquium on Automata, Languages, and
  Programming, {ICALP} 2020, July 8-11, 2020, Saarbr{\"{u}}cken, Germany
  (Virtual Conference)}, pages 104:1--104:19, 2020.

\end{thebibliography}

\appendix

\section{Inapproximability of $k$-\maxcover}\label{sec:maxcover}

In this section we show hardness  of $k$-\maxcover with pseudo projection property under \Wone\ and \eth{}, and use Lemma \ref{thm:gap} to show gap $k$-\maxcover hardness under these hypothesis.
\subsection{W[1]-Hardness of Approximation}\label{sec:clique-reduction}

\begin{lemma}\label{lem:clique-reduction}
	For every integer $t\in\mathbb{N}$ and every graph $H=(V',E')$ over $m$ vertices, where $V'=V'_1\dot\cup V'_2\dot\cup\cdots\dot\cup V'_t$, such that for all $i\in[t],$ $V'_i$ is an independent set of size $\frac{m}{t}$, there is an $O(m^3)$- time reduction which outputs a $k$-\maxcover instance $\Gamma=(V\cup W,E)$ of size $O(m^3)$ with $V$ divided into $k=\binom{t}{2}$ parts ans $W$ into $t$ parts. The $k$-\maxcover instance $\Gamma$ satisfies
	\begin{itemize}
		\item If $H$ contains a $t$-clique, then $\maxcover(\Gamma)=1$.
		\item If $H$ does not contain a $t$-clique, then $\maxcover(\Gamma)<1$.  
	\end{itemize}
	Furthermore, $\Gamma$ has the pseudo projection property.
\end{lemma}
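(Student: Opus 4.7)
The plan is to carry out a standard ``pairs vs.\ vertices'' encoding of $t$-clique as a $k$-\maxcover instance, chosen so that the incidence pattern between left and right super-nodes is either a single-neighbor projection or a complete bipartite graph (thus automatically enforcing the pseudo-projection property). Concretely, I will take $k=\binom{t}{2}$ and index the left super-nodes by pairs: for each $\{i,j\}\in\binom{[t]}{2}$, let $V_{\{i,j\}}$ consist of all edges of $H$ with one endpoint in $V'_i$ and the other in $V'_j$. On the right, set $W_r:=V'_r$ for each $r\in[t]$. The edge set is defined by the following two-case rule: for an ``edge-vertex'' $e=(u,v)\in V_{\{i,j\}}$ (with $u\in V'_i$, $v\in V'_j$) and a right super-node $W_r$, connect $e$ to $u$ if $r=i$, to $v$ if $r=j$, and to \emph{every} vertex of $W_r$ if $r\notin\{i,j\}$.

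The size/runtime bookkeeping is immediate: $|V|\le\binom{t}{2}(m/t)^2=O(m^2)$, $|W|=m$, and each left vertex has at most $|W|$ right neighbors, so $|E|=O(m^3)$ and the whole graph can be written down in $O(m^3)$ time. The pseudo-projection property is built in: for the pair $(V_{\{i,j\}},W_r)$ with $r\in\{i,j\}$ every edge-vertex has exactly one neighbor in $W_r$ (its own endpoint in the corresponding part), while for $r\notin\{i,j\}$ the bipartite graph is complete by construction.

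For completeness, suppose $H$ contains a $t$-clique with one vertex $v_r$ in each $V'_r$. Pick the labeling $e_{\{i,j\}}:=(v_i,v_j)\in V_{\{i,j\}}$, which is legitimate since these are edges of $H$. Fix any right super-node $W_r$: every chosen edge $e_{\{i,j\}}$ with $r\in\{i,j\}$ is connected to $v_r\in W_r$ (as $v_r$ is one of its endpoints), and every chosen edge with $r\notin\{i,j\}$ is connected to all of $W_r$ and in particular to $v_r$. Hence $W_r$ is covered by $v_r$, so $\maxcover(\Gamma)=1$. For soundness, conversely, assume $\maxcover(\Gamma)=1$ via a labeling $(e_{\{i,j\}})_{\{i,j\}\in\binom{[t]}{2}}$. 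Focus on a fixed $r\in[t]$: only the edges involving the index $r$, namely $\{e_{\{r,j\}}:j\neq r\}$, place nontrivial constraints on $W_r$ (all other edges are completely joined to $W_r$). A common neighbor in $W_r$ must therefore coincide with the $r$-endpoint of each such $e_{\{r,j\}}$. Thus all of $e_{\{r,j\}}$, $j\neq r$, share a common endpoint $v_r\in V'_r$. Doing this for every $r$ produces $(v_1,\dots,v_t)\in V'_1\times\cdots\times V'_t$ such that $e_{\{i,j\}}=(v_i,v_j)\in E'$ for every pair, i.e.\ $\{v_1,\dots,v_t\}$ is a $t$-clique in $H$.

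There is no real obstacle here beyond making sure the two cases of the edge rule align with the two cases of the pseudo-projection definition; the asymmetric treatment of ``incident'' versus ``non-incident'' right super-nodes is precisely what makes the completeness argument work (non-incident parts impose no constraint on the labeling) while still allowing the soundness argument to extract a clique (incident parts force consistency across pairs sharing an index). The size and runtime bounds fall out immediately from the trivial $O(m^2)$ bound on the number of edge-vertices and $O(m)$ bound on right vertices.
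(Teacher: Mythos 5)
Your construction is exactly the one the paper uses: left super-nodes indexed by pairs $\{i,j\}$ consisting of the cross-edges $E'_{i,j}$, right super-nodes $W_r = V'_r$, with the ``endpoint'' rule for incident indices and complete bipartite graphs otherwise, and the same completeness/soundness arguments (yours phrased as the contrapositive of the paper's). The proof is correct and follows the same approach as the paper.
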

\begin{proof}
Given a graph $H=(V',E')$ over $m$ vertices $V'=V'_1\dot\cup V'_2\dot\cup\cdots\dot\cup V'_t$, such that each $V_i'$ is an independent set, we
denote by $E'_{i,j}$ the set of edges between $V'_i$ and $V'_j$.

We now construct the $k$-\maxcover instance, set $k:=\binom{t}{2}$, and let $\Gamma=(V\dot\cup W, E)$ be a bipartite graph with vertex sets $V=V_1\dot\cup V_2\cdots V_k$ and $W=W_1\dot\cup W_2\cdots W_{t}$. We abuse notation a little and refer to a super node $V_l$ for $l\in[k]$ as $V_{i,j}$, for $i,j\in[t]$.

For all distinct $i,j\in[t]$, we define $V_{i,j}$ to be the set of edges $E'_{i,j}$, $V_{i,j}\overset{\Delta}{=}E'_{i,j}$. For every $i\in[t]$, we set $W_i$ to be $V_i'$, $W_{i}\overset{\Delta}{=}V'_{i}$. The edges in $\Gamma$:
\begin{itemize}
\item Between $W_i$ and $V_{i,j}$, we connect $w\in W_i$ and $e=(v_1,v_2)\in V_{i,j}$  if $w=v_1$ or $w=v_2$.
\item Between $W_i$ and $V_{j,l}$, when $i\neq j,l$, we create a full bipartite graph.
\end{itemize} 

We now prove the properties of the reduction.
\paragraph{Reduction Runtime and Size:}
The size of each $V_{i,j}$ is at most $\frac{m^2}{t^2}$  and the size of each $W_j$ is $\frac{m}{t}$. Since checking edge incidence is a trivial task,  the total runtime is at most $|V|\cdot |W|\le m^3$.

\paragraph{Pseudo Projection:} By definition, every edge $e=E'$, $e\in V_{i,j}$ has exactly one neighbor in $V'_i$ and one in $V_j'$, which corresponds in $\Gamma$ to one neighbor on $W_i$ and one on $W_j$. For every other $W_l$, there is a full bipartite graph in $\Gamma$ between $V_{i,j}$ and $W_l$, so $\Gamma$ has the pseudo projection property.

\paragraph{Completeness:} Suppose $H$ contains a $t$-clique. Each $V_j'$ is an independent set, so the $t$ clique has to contain one vertex in each $V_j'$, 
say $\{v'_1,\ldots ,v'_t\}\in V_1'\times\cdots\times V_t'$. We show that there exist a cover in $\Gamma$ with value $1$. For every distinct $i,j\in[t]$, we pick the label $e=(v'_i,v'_j)\in V_{i,j}$ (it exists because $\{v'_1,\ldots ,v'_t\}$ is a clique). We show that this set covers all of $W$: for any $W_j$, the vertex $v'_j\in W_j$ is connected to all of the edges: it is connected to its adjacent edges $(v_i',v_j')\in V_{i,j}$, and because there is a full bipartite graph, it is connected also to $(v_i',v_l')\in V_{i,l}$ when $j\neq i,l$.

\paragraph{Soundness:}Suppose $H$ does not contain a $t$-clique, then consider any labeling $S$ of $G$. If $S$ covers all the super nodes $W_j$, then there exist $\{v_{i,j}\in V_{i,j}\}_{i,j\in[t]}$ such that they have one common neighbor in each $W_j$, say $w_j$. Consider the set of $t$ vertices: $\{w_1,\ldots ,w_t\}$. For every $i,j\in [t]$ we have that $v_{i,j}$ is an edge in $H$ (as $v_{i,j}$ is a neighbor of $w_i$ and $w_j$ in $G$). Therefore, $\{w_1,\ldots ,w_t\}$ is $t$-clique of $H$ leading to a contradiction. 
\end{proof}

\begin{proof}[Proof of Theorem~\ref{thm:wone}]
We prove the theorem by a reduction from \clique to $k$-\maxcover and then to gap $k$-\maxcover, in a similar way to Theorem \ref{thm:eth-hardness}. Assume towards contradiction that there is an algorithm $\mathcal{A}$ running in time $F(k)\poly(n)$ and solves the gap $k$-\maxcover problem described in the theorem. We show an algorithm running in time $F'(t)\poly(m)$ for solving the $t$-clique problem on $m$ vertices.

The input is a graph $H=(V',E')$, where $V'$ is divided into $t$ parts. Denote $m=|V'|$.
\begin{enumerate}
	\item Use the reduction from Lemma~\ref{lem:clique-reduction} on the graph $H$ to get a $k$-\maxcover instance $\Gamma_0=(V_0\cup W_0,E_0)$, such that $V_0$ is divided into $k=\binom{t}{2}$ parts, and $W_0$ into $t$ parts.\label{item:clique-red}
	\item Let $C:\mathbb{F}_q^{t}\rightarrow\mathbb{F}_q^q$ be the Reed Solomon code, for $q$ a large prime power such that $\frac{1}{100}m^{\frac{1}{t}}\leq q\leq m^{\frac{1}{t}}$. Run the reduction from  Theorem \ref{thm:gap} on $\Gamma_0$ and $C$ and receive a $k$-\maxcover instance $\Gamma$.\label{item:clique-gap}
	\item Run algorithm $\mathcal{A}$ on $\Gamma$, answer like $\mathcal{A}$.
\end{enumerate}

We prove the correctness of the algorithm.
	\begin{description}
	\item[Runtime:] The reduction in Item~\ref{item:clique-red} takes $O(m^3)$ time and outputs a $k$-\maxcover instance $\Gamma_0$ of size at most $O(m^3)$, with the pseudo projection property.
	The gap generating algorithm from Theorem \ref{thm:gap} runs in time at most $|\Gamma_0|q q^{t} \leq m^3 m^{1+\frac{1}{t}}$, and outputs $\Gamma$ which is of size at most $|\Gamma_0|  q q^{t} \leq 2m^5$.
	Denote $|\Gamma|=n$, by our assumption, the runtime of $\mathcal{A}$ is $F(k)\poly(n)$ for a computable function $F$. As $n=\poly(m)$ and $k=\binom{t}{2}$, this is at most $F(t^2)\poly(m)$. Thus, the total runtime runtime of the algorithm in total is $F'(t)\poly(m)$ for a computable function $F'$.
	\item[Correctness:] If $H$ contains a t-clique, then by Lemma~\ref{lem:clique-reduction}, $\Gamma_0$ is a $k$-\maxcover instance with the pseudo projection property, such that $\maxcover(\Gamma_0)=1$. By Theorem \ref{thm:gap} $\maxcover(\Gamma)=1$, and algorithm $\mathcal{A}$ answers YES.
	
	If $H$ does not have a t-clique, then by Lemma~\ref{lem:clique-reduction}, $\Gamma_0$ satisfies $\maxcover(\Gamma_0)<1$. Therefore, by Theorem \ref{thm:gap} $\maxcover(\Gamma)<1-\Delta(C)$. The distance of $C$ is $1-\frac{t}{q}\geq 1- \frac{t}{m^{\frac{1}{t}}}$, in terms of  $|\Gamma|=n$, $\maxcover(\Gamma)\leq 1-\Delta(C)\leq n^{-\frac{1}{5t}}$, so the algorithm $\mathcal{A}$ answers no.\qedhere
\end{description}
\end{proof}

\subsection{ETH Hardness of Approximation}\label{sec:eth-reduction}

\begin{lemma}\label{lem:eth-reduction}
	For every 3-\cnf formula $\varphi$ over $n$ variables such that each variable appears in at most $3$ clauses, and every integer $k\in\mathbb{N}$, there is a $n^2 2^{\frac{6n}{k}}$- time reduction which outputs a $k$-\maxcover instance $\Gamma=(V\cup W,E)$ of size $2^{\frac{6n}{k}}\poly(k)$ with $V$ divided into $k$ parts ans $W$ into $O(k^3)$ parts. The $k$-\maxcover instance $\Gamma$ satisfies
	\begin{itemize}
		\item If $\varphi$ is satisfiable, then $\maxcover(\Gamma)=1$.
		\item If $\varphi$ is not satisfiable, then $\maxcover(\Gamma)<1$.  
	\end{itemize}
	Furthermore, $\Gamma$ has the pseudo projection property.
\end{lemma}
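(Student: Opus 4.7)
The plan is to start from a bounded-occurrence 3-SAT formula $\varphi$ guaranteed by \eth{} and perform a block-partitioning reduction to $k$-\maxcover, adapted so that pseudo-projection falls out by design. Partition the $n$ variables of $\varphi$ arbitrarily into $k$ blocks $B_1,\dots,B_k$ of size $n/k$. By the bounded-occurrence hypothesis each variable lies in at most three clauses and each clause has three variables, so the ``$i$th neighborhood'' $N_i := B_i \cup \{x : x \text{ shares a clause with some variable of } B_i\}$ has size $O(n/k)$, and every clause touching $B_i$ has all three of its variables inside $N_i$. Finally, group the clauses by the (unordered) set of blocks their variables touch: for each $T\subseteq[k]$ with $|T|\le 3$, let $\mathcal{C}_T$ denote the clauses touching precisely the blocks in $T$; this gives the $O(k^3)$ buckets that will become the right super-nodes.

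For each $i\in[k]$, let $V_i$ be the set of Boolean assignments to $N_i$ that satisfy every clause lying entirely inside $N_i$, so $|V_i| \le 2^{O(n/k)}$. For each $T$, let $W_T$ be the set of joint assignments to the variables appearing in $\mathcal{C}_T$ that satisfy every clause in $\mathcal{C}_T$; there are $O(k^3)$ such right super-nodes, each of size $\le 2^{O(n/k)}$. The edges are defined as follows: for $i \notin T$ we put the full bipartite graph between $V_i$ and $W_T$; for $i \in T$ we put $v \sim \alpha$ iff the restriction of $v$ to $\mathrm{vars}(\mathcal{C}_T)$ equals $\alpha$. This restriction is well-defined and automatically lands in $W_T$, because every variable of every clause in $\mathcal{C}_T$ sits in $N_i$ (such a clause touches $B_i$ since $i \in T$) and $V_i$ was chosen to satisfy all such clauses.

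Pseudo-projection is then immediate by construction: when $i\notin T$ the edges between $V_i$ and $W_T$ are full bipartite, and when $i\in T$ the map $v\mapsto v|_{\mathrm{vars}(\mathcal{C}_T)}$ gives each $v\in V_i$ exactly one neighbor in $W_T$. Completeness: setting $v_i := s|_{N_i}$ for any satisfying assignment $s$ of $\varphi$, the restriction $s|_{\mathrm{vars}(\mathcal{C}_T)}\in W_T$ is a common neighbor of all the $v_i$'s. Soundness: any labeling $(v_1,\dots,v_k)$ that covers every $W_T$ forces the functional projections to agree on shared variables (via the consistency enforced by each $W_T$), so the $v_i$'s glue into a global Boolean assignment satisfying every clause of $\varphi$; contrapositively, if $\varphi$ is unsatisfiable then $\maxcover(\Gamma)<1$. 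The main bookkeeping obstacle I anticipate is keeping the total instance size inside the stated $2^{6n/k}\poly(k)$ bound: one must tightly count $|N_i|$ using the bounded-occurrence hypothesis (the fact that each variable has at most $6$ ``clause-neighbour'' variables is what underlies the exponent $6$) and carefully account for the full-bipartite blocks between $V_i$ and $W_T$ when $i\notin T$, which dominate the edge count; the rest of the verification reduces to routine counting along the lines of \cite{KLM19}.
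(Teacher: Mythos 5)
Your construction is correct, but it takes a genuinely different route from the paper's. The paper partitions the \emph{clauses} of $\varphi$ into $k$ groups $C_1,\dots,C_k$; each $V_i$ consists of the satisfying partial assignments to the variables of $C_i$, and the right super-nodes are indexed by subsets $J\subseteq[k]$ with $|J|\le 3$ (each variable occurs in at most three clauses, hence in at most three groups), where $W_J$ contains \emph{all} assignments to the variables whose occurrence pattern is exactly $J$ --- so on the right side only equality of shared variables is checked, and satisfaction lives entirely on the left. You instead partition the \emph{variables} into blocks, take $V_i$ to be the assignments to the closed neighborhood $N_i$ satisfying the clauses inside it, bucket the clauses by the set $T$ of blocks they touch, and let $W_T$ be the satisfying assignments to $\mathrm{vars}(\mathcal{C}_T)$, with consistency enforced because every block touched by a clause projects functionally onto $W_T$. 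Both routes give pseudo-projection, $O(k^3)$ right parts, and the stated completeness/soundness; your soundness sketch is sound, but the gluing should be made explicit via home blocks: define the global assignment by $s(x):=v_j(x)$ for $x\in B_j$, and note that two non-home blocks sharing $x$ each agree with $v_j$ through a clause bucket containing both indices --- a transitivity step you leave implicit.

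Where your proposal falls short of the literal statement is the quantitative bound. With blocks of $n/k$ variables and each variable having up to six clause-neighbours, $|N_i|$ can be as large as $7n/k$, so $|V_i|,|W_T|\le 2^{7n/k}$, and the full-bipartite blocks between $V_i$ and $W_T$ (for $i\notin T$) push the total size, edges included, to roughly $2^{14n/k}\poly(k)$ rather than $2^{6n/k}\poly(k)$; your own heuristic that ``six clause-neighbours underlies the exponent $6$'' does not close this gap. The paper's clause partition keeps each super-node over at most $3n/k$ variables, and its exponent $6$ is exactly $3+3$ coming from the edge count $|V|\cdot|W|$. The weaker exponent is harmless for Theorem~\ref{thm:eth-hardness} (rescale $k$ by a constant), but as written your counting does not deliver the lemma's stated $2^{6n/k}\poly(k)$ size and $n^2 2^{6n/k}$ running time.
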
 
\begin{proof}
Given a 3-\cnf formula $\varphi$ over $n$ variables and $m\leq 3n$ clauses. Let $C_1\dots C_k$ be a partition of the clauses into $k$ approximately equal sets.

Let $G(V\dot\cup W, E)$ be a bipartite graph with vertex sets $V=V_1\dot\cup V_2\cdots V_k$ and $W=W_1\dot\cup W_2\cdots W_{t}$, where $t=\binom{k}{1}+\binom{k}{2}+\binom{k}{3}$. Each $j\in [t]$ is associated to a subset $J\subset[k],1\leq\abs{J}\leq 3$, in the proof we abuse notation a bit and use $J\in[t]$ while treating $J$ as a subset.

Each $V_i$ contains the set of all partial assignments satisfying the clauses in $C_i$. Note that $|V_i|\le 2^{3n/k}$. For each $J\in[t]$, if $J=\{i_1,i_2,i_3\}$, let $S_J$ be the set of variables appears both in $C_{i_1}$ in $C_{i_2}$ and in $C_{i_3}$. For smaller $J$, $S_J$ contains the set of variables appears exactly on all $C_i$ for $i\in J$ and not in other partitions (such that each variable in $[n]$ belongs exactly to a single $S_J$). The vertex set $W_J$ contains all assignments to $S_J$.
Again note that $|W_J|\le 2^{3n/k}$. 

For every $i\in[k],J\in[t]$, the edges are as follows:
\begin{itemize}
	\item If $i\in J$, then connect every $v\in V_i$ with it's consistent assignment $w\in W_J$.
	\item If $i\notin J$, connect every $v\in V_i$ to every $w\in W_J$.
\end{itemize}

We now prove the properties of the reduction.
\paragraph{Reduction Runtime and Size:}
The size of each $V_i$ is at most $2^{\frac{3n}{k}}$ because its an assignment over at most $\frac{3n}{k}$ variables. Similarly the size of each $W_j$ is at most $2^{\frac{3n}{k}}$. The number of partition is clear from construction.

It takes linear time in $n$ to check if every partial assignment to a partition $C_i$ is satisfying. For each vertex $v\in V_i$ it takes at most time $\abs{W}$ to create all its edges. Therefore the total runtime is at most $n\abs{W}\abs{V} = n2^{\frac{3n}{k}}2^{\frac{3n}{k}}kt \leq n^2 2^{\frac{6n}{k}}$.

\paragraph{Partial Projection:} Each vertex $v\in V_i$ is an assignment to all variables in the clauses in $C_i$. For every $J\ni i$, a vertex $w\in W_J$ is an assignment to the variables in $S_J$, which is a subset of the variables in $C_i$. Therefore, there is exactly a single $w\in W_J$ which is consistent with $u$.

For every $J$ such that $i\notin J$, there is a full bipartite graph between $V_i$ to $W_J$, which matches the definition of the pseudo projection property.

\paragraph{Completeness:}Suppose $\varphi$ is a satisfiable formula, and let $x=x_1,\dots x_n$ be an assignment which satisfies $\varphi$. Let $v_1\in V_1,\dots v_k\in V_k$ be the vertices which represents the restriction of $x$ to each part $C_i$. Similarly, let $w_1\in W_1,\dots w_t\in W_t$ be the restriction of $x$ to each $S_J$. We claim that $W$ is fully covered by $v_1,\dots v_k$, with the nodes $w_1,\dots w_t$. Fix an arbitrary $J\in[t],i\in[k]$, if $i\notin J$ then by definition $v_i$ is connected to all of $W_J$, so $w_J$ is covered by $v_i$. For $i\in J$, since both $v_i,w_J$ are a restriction of $x$, they are consistent and are connected by an edge. Therefore $v_1,.\dots v_k$ covers $W$.

\paragraph{Soundness:}Suppose $\varphi$ is not satisfiable, and let $v_1\in V_1,\dots v_k\in V_k$ to be some labeling of $\Gamma$. By definition, each $v_i$ satisfies all clauses in $C_i$. Since $\varphi$ is not satisfiable, $v_1,\dots v_k$ can't be a restriction of a single assignment, and there must be some $r\in[n]$, $i_1,i_2\in [k]$ such that $v_{i_1},v_{i_2}$ assign different values to $x_r$. Assume towards contradiction that $v_1,\dots v_k$ fully covers $W$, and let $w_1\in W_1,\dots w_t\in W_t$ be the vertices which are the joint neighbors. Let $J\in[t]$ be the subset containing all indices from $k$ in which $x_r$ appears, it must be that $i_1,i_2\in J$ (there might be a third index). The vertex $w_J$ assigns some value to $x_r$, and it's not possible that both $v_{i_1},v_{i_2}$ are consistent with it (as they assign different values to $x_r$). Therefore $w_J$ is not a joint neighbor of $v_1,\dots v_k$ and we have a contradiction.
\end{proof}

\begin{proof}[Proof of Theorem~\ref{thm:eth-hardness}]
	We prove the theorem by a reduction from 3-\SAT to $k$-\maxcover, and then to gap $k$-\maxcover. 
	
	Assume towards contradiction that there exists an algorithm $\mathcal{A}$ which solves the gap $k$-\maxcover problem described in the theorem in time $n^{o(k)}$. In particular, $\mathcal{A}$ runs in time less than $n^{\frac{\epsilon}{10} k}$, where $\epsilon$ is the constant from the \eth{} hypothesis (see Hypothesis \ref{hyp:eth}). We show an algorithm for solving 3-$\sat$ on $m$ variables in time less than $2^{\epsilon m}$, refuting \eth{}.
	
	The input is a 3-\cnf formula  $\varphi$ on $m$ variables, such that each variable appears in at most $3$ clauses.
	\begin{enumerate}
	\item Set $k=\frac{1}{\epsilon}$ and run the reduction from Lemma \ref{lem:eth-reduction} on $\varphi$ with parameter $k$, receiving a $k$-\maxcover instance $\Gamma_0=(V_0\cup W_0,E)$ with $V_0$ divided into $k$ parts and $W_0$ into at most $k^3$ parts.
	\item Let $C:\mathbb{F}_q^{k^3}\rightarrow\mathbb{F}_q^q$ be the Reed Solomon code, for $q$ a large prime power such that $\frac{1}{100}2^{\frac{m}{k^4}}\leq q\leq 2^{\frac{m}{k^4}}$. Run the reduction from  Theorem \ref{thm:gap} on $\Gamma_0$ and $C$ and receive a $k$-\maxcover instance $\Gamma$.\label{item:gap}
	\item Run algorithm $\mathcal{A}$ on $\Gamma$, answer like $\mathcal{A}$.
	\end{enumerate}
	
	We show that the described algorithm solves 3-$\sat$ and runs in time $O(2^{\epsilon m})$. 
	\begin{description}
	\item[Runtime:] $\varphi$ is a 3-\cnf formula on $m$ variables such that each variable appears in at most $3$ clauses. From Theorem \ref{lem:eth-reduction}, the size of $\Gamma_0$ is $2^{\frac{6m}{k}}\poly(k)$, and the reduction runtime is at most $2^{\frac{6m}{k}}m^2$. By Theorem \ref{thm:gap}, $\Gamma$ has size at most $|\Gamma_0|  q q^{k^3} = 2^{\frac{6m}{k}}\poly(k) 2^{\frac{2m}{k}}\leq 2^{\frac{10m}{k}}$,  and the runtime of Item \ref{item:gap} is at most  $|\Gamma_0|q q^{k^3} \leq 2^{\frac{10m}{k}}$. Denote $|\Gamma|=n$, the runtime of $\mathcal{A}$ is at most $n^{\frac{\epsilon}{10} k}$, which is less than $2^{\frac{10m}{k}\frac{\epsilon}{10}k}$. So the total run time of the algorithm is less than $ 2^{\epsilon m}$.
	\item[Correctness:] Suppose $\varphi$ is a satisfiable 3-\cnf, then according to Theorem \ref{lem:eth-reduction}, $\Gamma_0$ is a $k$-\maxcover instance with the pseudo projection property, and $\maxcover(\Gamma_0)=1$. From Theorem~\ref{thm:gap}, $\maxcover(\Gamma)=1$ and algorithm $\mathcal{A}$ should answer YES in this case.
	
	In the case $\varphi$ is not satisfiable, then by Theorem~\ref{lem:eth-reduction}, $\Gamma_0$ has the pseudo projection property and $\maxcover(\Gamma_0)<1$. Therefore, by Theorem \ref{thm:gap} $\maxcover(\Gamma)<1-\Delta(C)$. The distance of Reed Solomon code $C:\mathbb{F}_q^{k^3}\rightarrow\mathbb{F}_q^q$ is $1-\frac{r}{q}\geq 1- 2^{-\frac{m}{k^4}}$. Calculating the gap with respect to $|\Gamma|$, $1-\Delta(C)\leq 2^{-\frac{m}{k^4}} = n^{-O(\frac{1}{k^3})}$, as $|\Gamma|=n=2^{\frac{10m}{k}}$. Therefore, $\mathcal{A}$ should answer NO.\qedhere
	\end{description}
\end{proof}

\section{Gap creation in \maxcover for $k=2$} \label{sec:deg}
In the special case of $k=2$, and where the out degree of vertices in $V$ is small, we can show a gap creating reduction for \maxcover without requiring the partial projection property.
\begin{lemma}
	Let $\Gamma_0=(V\cup W,E_0)$ be a \maxcover instance with $V=V_1\dot\cup V_2$, $W = W_1,\dot\cup\cdots\dot\cup W_t$, and for every $v\in V$ and $j\in t$, $\abs{\mathcal{N}(u)\cap W_j}\leq d$. Let $C:\Sigma^r\rightarrow\Sigma^\ell$ be an error correcting code such that $\abs{\Sigma}^r \geq \abs{W_j}$ for every $j\in[t]$.
	Then there exists a reduction in time $O(\abs{\Gamma_0}\ell\abs{\Sigma}^t)$ to a \maxcover instance $\Gamma(V\cup A, E)$ of size $\abs{V}\ell\abs{\Sigma}^t$ with  $V$ divided into $2$ parts, and $A$ into $\ell$ parts. The new instance $\Gamma$ satisfies
	\begin{itemize}
		\item If $\maxcover(\Gamma_0)=1$, then $\maxcover(\Gamma)=1$.
		\item If $\maxcover(\Gamma_0)<1$ , then $\maxcover(\Gamma)\le d^2(1-\dist(C))$.  
	\end{itemize}
\end{lemma}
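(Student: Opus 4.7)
The plan is to reuse the composition from Theorem~\ref{thm:gap} essentially verbatim: arbitrarily match each $w\in W_j$ to a distinct $b\in B_j$ of the threshold graph $G_{C,t}$, and then define $\Gamma=(V\dot\cup A,E)$ by connecting $v\in V_i$ to $a\in A_l$ whenever there exist witnesses $w_1\in\neighb(v)\cap W_1,\dots,w_t\in\neighb(v)\cap W_t$ whose matched images $b_1,\dots,b_t$ in $G_{C,t}$ are all neighbors of $a$. The size and runtime bounds are identical to Theorem~\ref{thm:gap}, and the completeness direction goes through verbatim: a satisfying labeling $(v_1,v_2)$ of $\Gamma_0$ has witnesses $(w_1,\dots,w_t)$ jointly covered by both; by the completeness clause of Lemma~\ref{lem:threshold}, for each $l\in[\ell]$ the matched tuple $(b_1,\dots,b_t)$ has a common neighbor in $A_l$, and this vertex is automatically a common neighbor of $v_1,v_2$ in $\Gamma$.

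The interesting step is soundness, where pseudo-projection is no longer available and the degree hypothesis takes its place. I would fix any labeling $(v_1,v_2)\in V_1\times V_2$ and use $\maxcover(\Gamma_0)<1$ to pick an uncovered super-node $j^*\in[t]$, so $\neighb(v_1)\cap\neighb(v_2)\cap W_{j^*}=\emptyset$. For any $a\in A_l$ that is a common neighbor of $v_1,v_2$ in $\Gamma$, the construction forces the existence of $w_1\in\neighb(v_1)\cap W_{j^*}$ and $w_2\in\neighb(v_2)\cap W_{j^*}$ whose matched vertices $b_1,b_2\in B_{j^*}$ are both adjacent to $a$ in $G_{C,t}$; the disjointness at $j^*$ then forces $b_1\neq b_2$.

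Now I would invoke the soundness clause of Lemma~\ref{lem:threshold}: for any fixed distinct pair $b_1,b_2\in B_{j^*}$, the set of parts $A_l$ containing a common $G_{C,t}$-neighbor of $b_1,b_2$ has size at most $(1-\dist(C))\ell$. The degree hypothesis gives the crude count $|\neighb(v_1)\cap W_{j^*}|\cdot|\neighb(v_2)\cap W_{j^*}|\le d^2$ on the number of pairs $(b_1,b_2)$ that can possibly arise as witnesses. A union bound over these at most $d^2$ pairs therefore bounds the number of parts $A_l$ that are covered by $(v_1,v_2)$ by $d^2(1-\dist(C))\ell$, which translates to $\maxcover(\Gamma)\le d^2(1-\dist(C))$ after dividing by $\ell$ and maximizing over the labeling.

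I do not anticipate a serious obstacle: the only conceptual change from Theorem~\ref{thm:gap} is that the uniqueness of a $W_j$-neighbor (provided there by pseudo-projection) is replaced by the bound ``at most $d$ choices,'' and the $d^2$ factor in the soundness is precisely the cost of this union bound. The main care to take is to argue at the uncovered super-node $j^*$ rather than across all $j\in[t]$, so that disjointness of neighborhoods genuinely forces $b_1\neq b_2$ and the threshold soundness clause becomes applicable.
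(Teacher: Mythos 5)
Your proposal is correct and follows essentially the same route as the paper's proof: the identical composition with $G_{C,t}$, the same completeness argument, and the same soundness argument that fixes an uncovered super-node $W_{j^*}$, notes that the witness pairs $b_1\neq b_2$ come from the disjoint sets $\neighb(v_1)\cap W_{j^*}$ and $\neighb(v_2)\cap W_{j^*}$ of size at most $d$ each, and union-bounds over the at most $d^2$ pairs via the soundness clause of Lemma~\ref{lem:threshold}. No gaps to report.
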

Notice that in order to use the reduction we must use an error correcting code with large distance, else the soundness guarantee in the above lemma is meaningless.

\begin{proof}
The proof is essentially the same as the proof of Theorem \ref{thm:gap}, with slight modifications.
Let $G_{C,t}$ be the threshold graph from Definition \ref{def:threshold}, with the error correcting code $C$ and integer $t$.  We compose $\Gamma_0$ with $G_{C,t}$ to create our new instance $\Gamma$.
	
For every $j\in [t]$, we arbitrarily match every vertex in $w_j\in W_j$ to a vertex $b_j\in B_j$ without repetitions. This can be done since $\abs{W_j}\leq \abs{B_j}$. The new instance $\Gamma$ is defined as follows:
	\begin{itemize}
		\item The vertex sets of $\Gamma$ are $V$ from $\Gamma_0$, and $A$ from $G_{C,t}$.
		\item A vertex $v\in V_i$ is connected to $a\in A_j$ if there exists $w_1\in W_1,\dots w_t\in W_t$ such that $v$ is connected to $w_1,\dots w_t$ in $\Gamma_0$, and $a$ is connected to the matching $b_1,\dots b_t$ in $G_{C,t}$.
	\end{itemize}
	
	We prove:
	\paragraph{Runtime and Size} The size $\Gamma$ is $\abs{V}\abs{A}$, where $\abs{A} = \ell\abs{A_i} = \ell\abs{\Sigma}^t$.
	
	To create the new instance, for each $v\in V$ and $a\in A$ we go over all their neighbors and check if every $W_j$ is covered by a joint neighbor. This can be done in a linear time in $\abs{W}$. Therefore, the runtime of the reduction is bounded by $\abs{V}\abs{A}\abs{W}= O(\abs{G_0}\ell\abs{\Sigma}^t)$.

	\paragraph{Completeness} Suppose $\maxcover(\Gamma_0)=1$, let $v_1\in V_1,v_2\in V_2$ be a covering set, and let $w_1,\ldots w_t\in W_1\times\cdots \times W_t$ be the joint neighbors covered by $v_1,v_2$.
	
	Let $b_1,\dots b_t$ the matching vertices to $w_1,\dots w_t$. By the definition of the threshold graph, for every $l\in[\ell]$ there exists a vertex $a_l\in A_l$ which is a common neighbor of $b_1,\ldots b_t$ in $G_{C,t}$. From the composition definition, this $a_l$ is a neighbor of $v_1$ and $v_2$ in $\Gamma$, so $A_l$ is covered by $v_1,v_2$. 
	
	\paragraph{Soundness}  $\maxcover(\Gamma_0)<1$.  Let $v_1\in V_1, v_2\in V_2$ to be some labeling of $\Gamma_0$. Since $\Gamma_0$ is not satisfiable, $v_1,v_2$ does not cover all of $W$. Suppose $W_j$ is not covered by $v_1,v_2$, denote $S_1 = \mathcal{N}(v_1)\cap W_j$ and $S_2 = \mathcal{N}(v_2)\cap W_j$. Because $W_j$ is uncovered by $v_1,v_2$, $S_1\cap S_2=\emptyset$, and the degree bounds promises that $\abs{S_1},\abs{S_2}\leq d$. 
	
	Let $S_1',S_2'\subset B_j$ be the vertices in $G_{C,t}$ which are matched to $S_1,S_2$ respectively. The matching is one to one, so $S_1'\cap S_2'=\emptyset$.
	By our composition, any neighbor $a$ of $v_1$ in $\Gamma$ has to be a neighbor of some vertex $b\in S_1'$ in $G_{C,t}$, and the same for $v_{2}$ and $S_2'$. 
	
	By Lemma \ref{lem:threshold}, the threshold graph $G_{C,t}$ is such that for every $j$ and for every $b\neq b'\in B_j$, all except $(1-\dist(C))\ell$ of the indices $l\in \ell$, $\neighb(b)\cap\neighb(b')\cap A_l = \emptyset$. For every $b\in S_1',b'\in S_2'$, there are at most $(1-\dist(C))\ell$ indices such that $\neighb(b)\cap\neighb(b')\cap A_l \neq \emptyset$. There are at most $d^2$ pairs of $b\in S_1',b'\in S_2'$, so for all except $d^2 (1-\Delta(C))\ell$ of the indices $l\in\ell$, there is no $a\in A_l$ which is a common neighbor of $v_1,v_2$.
\end{proof}

%
%\section{Comparison of Theorem~\ref{thm:gap} with \cite{KLM19}}
%
%
%\knote{Projection property implies reduction to PSP of multiequality}

\end{document}